\definecolor{darkgreen}{rgb}{0,0.5,0}
\definecolor{lightgreen}{rgb}{0,0.75,0}
\newtheorem{proposition}{Proposition}
\theoremstyle{definition}
\definecolor{darkgreen}{rgb}{0,0.4,0.3}
\definecolor{gray}{rgb}{0.5,0.5,0.5}
\newcommand{\Rb}{\mathbb{R}} 
\newcommand{\Cb}{\mathbb C} 
\newcommand{\ca}[1]{\mathcal{#1}} 
\newcommand{\mo}[1]{\left| #1 \right|}
\newcommand{\rme}{{\rm e}}
\newcommand{\hh}{\mathcal{H}} 
\newcommand{\ket}[1]{|#1\rangle} 
\newcommand{\kb}[2]{|#1\rangle\langle#2|} 
\newcommand{\no}[1]{\left\|#1\right\|} 
\newcommand{\tr}[1]{{\rm tr}\left[#1\right]} 
\newcommand{\id}{\mathbbm{1}} 
\newcommand{\va}{{\vec{a}}} 
\newcommand{\vb}{{\vec{b}}} 
\newcommand{\vc}{{\vec{c}}} 
\newcommand{\vd}{{\vec{d}}} 
\newcommand{\ve}{{\vec{e}}} 
\newcommand{\vu}{{\vec{u}}} 
\newcommand{\vv}{{\vec{v}}} 
\newcommand{\vf}{{\vec{f}}}
\newcommand{\vsigma}{\vec{\sigma}}
\newcommand{\A}{\mathsf{A}}
\newcommand{\B}{\mathsf{B}}
\newcommand{\Cc}{\mathsf{C}}
\newcommand{\D}{\mathsf{D}}
\newcommand{\E}{\mathsf{E}}
\newcommand{\F}{\mathsf{F}}
\newcommand{\M}{\mathsf{M}}
\newcommand{\N}{\mathsf{N}}
\newcommand{\J}{\mathsf{J}}
\newcommand{\Meas}{\mathcal{M}} 
\newcommand{\JMeas}{\mathcal{CM}} 
\newcommand{\IR}{\rm IR} 
\newcommand{\en}{\mathcal{E}} 
\newcommand{\PP}{\mathbb{P}} 
\newcommand{\Sec}{\mathcal{S}} 
\pgfmathsetmacro{\rvec}{.8}
\pgfmathsetmacro{\thetavec}{30}
\pgfmathsetmacro{\phivec}{60}
\pgfmathsetmacro{\rvecQ}{.8}
\pgfmathsetmacro{\thetavecQ}{90}
\pgfmathsetmacro{\phivecQ}{-45}
\pgfmathsetmacro{\rvecT}{1}
\pgfmathsetmacro{\thetavecT}{90}
\pgfmathsetmacro{\phivecT}{45}
\pgfmathsetmacro{\rvecC}{1}
\pgfmathsetmacro{\thetavecC}{90}
\pgfmathsetmacro{\phivecC}{60}
\pgfmathsetmacro{\rvecD}{.8}
\pgfmathsetmacro{\thetavecD}{90}
\pgfmathsetmacro{\phivecD}{20}
\pgfmathsetmacro{\rvecF}{.8}
\pgfmathsetmacro{\thetavecF}{90}
\pgfmathsetmacro{\phivecF}{-20}
\pgfmathsetmacro{\rvecDD}{.65}
\pgfmathsetmacro{\thetavecDD}{100}
\pgfmathsetmacro{\phivecDD}{-18}
\pgfmathsetmacro{\rvecFF}{1}
\pgfmathsetmacro{\thetavecFF}{80}
\pgfmathsetmacro{\phivecFF}{+18}
\begin{document}

\title{Experimentally determining the incompatibility of two qubit measurements}

\author{Andrea Smirne}
\email{andrea.smirne@unimi.it}
\affiliation{Dipartimento di Fisica ``Aldo Pontremoli'', Universit{\`a} degli Studi di Milano, Via Celoria 16, 20133 Milano, Italy}
\affiliation{Istituto Nazionale di Fisica Nucleare, Sezione di Milano, Via Celoria 16, 20133 Milano, Italy}

\author{Simone Cialdi}
\affiliation{Dipartimento di Fisica ``Aldo Pontremoli'', Universit{\`a} degli Studi di Milano, Via Celoria 16, 20133 Milano, Italy}
\affiliation{Istituto Nazionale di Fisica Nucleare, Sezione di Milano, Via Celoria 16, 20133 Milano, Italy}

\author{Daniele Cipriani}
\affiliation{Dipartimento di Fisica ``Aldo Pontremoli'', Universit{\`a} degli Studi di Milano, Via Celoria 16, 20133 Milano, Italy}

\author{Claudio Carmeli}
\affiliation{Dipartimento di Ingegneria Meccanica, Energetica, Gestionale e dei Trasporti, Universit\`a di Genova, Via Magliotto 2, 17100 Savona, Italy}

\author{Alessandro Toigo}
\affiliation{Dipartimento di Matematica, Politecnico di Milano, Piazza Leonardo da Vinci 32, 20133 Milano, Italy}
\affiliation{Istituto Nazionale di Fisica Nucleare, Sezione di Milano, Via Celoria 16, 20133 Milano, Italy}

\author{Bassano Vacchini}
\email{bassano.vacchini@mi.infn.it}
\affiliation{Dipartimento di Fisica ``Aldo Pontremoli'', Universit{\`a} degli Studi di Milano, Via Celoria 16, 20133 Milano, Italy}
\affiliation{Istituto Nazionale di Fisica Nucleare, Sezione di Milano, Via Celoria 16, 20133 Milano, Italy}

\begin{abstract}
We describe and 
realize an experimental procedure for assessing the incompatibility of two qubit measurements. The experiment consists in a state discrimination task where either measurement is used according to some partial intermediate information. The success statistics of the task provides an upper bound for the amount of incompatibility of the two measurements, as it is quantified by means of their incompatibility robustness. For a broad class of unbiased and possibly noisy qubit measurements, one can make this upper bound 
coincide with the true value of the robustness by suitably tuning the preparation of the experiment. We demonstrate this fact in an optical setup, where the qubit states are encoded into the photons' polarization degrees of freedom, and incompatibility is directly
accessed by virtue of a refined control on the amplitude, phase and purity of the final projection stage of the measurements. Our work thus establishes the practical feasibility
of a recently proposed method for the detection of quantum incompatibility.
\end{abstract}

\maketitle

\section{Introduction}

Quantum incompatibility is one of the most striking and fascinating features of the quantum world. It roots in the very essence of quantum physics as a noncommutative probabilistic theory: in such a theory, there necessarily exist measurements that are mutually exclusive, in the sense that their statistics cannot be post-processed from any single and more informative joint measurement \cite{HeMiZi16,Guehne2021}.

Despite its statistical nature, however, quantum incompatibility is elusive in practical experiments. Indeed, being a no-go statement about the theory, its empirical verification can only be of indirect type. Usually, quantum incompatibility is detected by means of the violation of Bell-type or steering inequalities \cite{WoPeFe09,QuVeBr14,UoMoGu14}. These inequalities are violated only in entangled states, and thus generally may require a demanding preparation to be performed by the experimenter.

The situation considerably changed after some recent works pointed out that quantum incompatibility can be assessed also in a {\em state discrimination task with intermediate information} \cite{CaHeTo19,SkSuCa19,UoKrShYuGu19}. Such a task does not involve multipartite quantum systems, and for this reason it is easier to be performed in the laboratory. The task is actually a variation of the usual state discrimination, from which it differs only in the following extra step: before guessing the unknown state, the experimenter receives some classical information, and, based on it, he performs a measurement chosen from some predetermined possible alternatives \cite{BaWeWi08,GoWe10,CaHeTo18,CaHeTo21}. It then turns out that the probability of guessing the correct state is connected to the incompatibility of the measurements used in the experiment. More precisely, the success probability yields an upper bound for the {\em incompatibility robustness} of these measurements, a quantifier of incompatibility that is widely used in the literature~\cite{UoBuGuPe15,Haapasalo15,DeFaKa19}. The bound is device-dependent, meaning that the experimenter needs to have full control over the states to be discriminated. Nevertheless, if the states are suitably tuned, the bound actually coincides with the true value of the robustness for a quite broad class of measurements.

In this paper, we devise and experimentally demonstrate a concrete and versatile strategy to determine quantum incompatibility via a state discrimination task with intermediate information. For the experimental implementation, we employ a two-level system realized in a quantum optical setup. Qubit states of the system are encoded into the photon polarization degrees of freedom, while the measurements are realized by means of a spatial light modulator, capable of introducing both  any wanted phase and any programmed noise \cite{Smirne2011,Smirne2013,Cialdi2017a,Cialdi2019}, and also ensuring a high degree of control on the final polarization projections. For a large family of measurements, we show the agreement between the theoretical evaluation of the incompatibility robustness and its assessment via the experimentally-determined success statistics of our state discrimination task. Moreover, we explore a class of measurements whose incompatibility robustness still lacks an analytic derivation, but nonetheless can be upper bounded via the experimental procedure proposed here.


The realization of experiments aimed at testing and assessing quantum incompatibility is a very novel field of research, to the point that, up to our knowledge, only two recent articles addressed this topic before the present one. In \cite{AnMuChMiTaBo20}, the degree of incompatibility of two qubit measurements was evaluated by determining the success probability of a communication protocol based on random access code. In the proposed experiment, incompatibility was detected and quantified by means of the violation of a classical bound in the quantum version of the protocol \cite{AmLeMaOz08}. More directly related to our approach, the authors of \cite{Wu2021} employed a state discrimination task with intermediate information in order to determine the incompatibility robustness of two mutually unbiased bases in a three dimensional system. They considered noisy versions of a fixed pair of such bases, and assessed incompatibility under different levels of the noise. In the present work, we drop the mutual unbiasedness hypothesis, thus extending the results of \cite{Wu2021} to a considerably more general class of measurements.



The paper is organized as follows. Section \ref{sec:inco_quantifier} recalls the basic facts about quantum incompatibility and, in particular, the notion of incompatibility robustness and its relation with state discrimination tasks with intermediate information. We then focus on the specific state discrimination task of our experiment and on the noisy unbiased qubit measurements that constitute the target class of measurements.
Section \ref{sec:the} contains the main results of the paper. After illustrating the experimental apparatus, we perform our state discrimination task and, by means of its results, we empirically determine the incompatibility robustness of all pairs of equally noisy unbiased qubit measurements. Moreover, by considering qubit measurements with different amounts of noise, we demonstrate that the same procedure provides a nontrivial upper bound for the incompatibility robustness even in circumstances where its analytic evaluation is still an open problem. Finally, Section \ref{sec:conc} discusses the conclusions and future outlooks of our work.

\section{Detecting quantum incompatibility}\label{sec:inco_quantifier}


\subsection{Incompatibility robustness}\label{sec:inco_robu}

We consider finite-dimensional quantum systems, i.e., systems in which a {\em state} is described by a complex matrix $\varrho$ whose eigenvalues are nonnegative and sum to one;
in symbols, $\varrho\geq 0$ and $\tr{\varrho}=1$.
If $X=\{x_1,\ldots,x_n\}$ is any finite set, a {\em measurement} with outcomes in $X$ is a collection $\A=\{\A(x_1),\ldots,\A(x_n)\}$ of $d\times d$ complex matrices such that $\A(x)\geq 0$ for all $x\in X$ and $\sum_{x\in X} \A(x) = \id$, where $\id$ denotes the identity matrix \cite{QCQI00}. The probability that the measurement $\A$ gives the outcome $x$ when performed in the state $\varrho$ is given by the Born rule $\tr{\varrho\A(x)}$. Measurements constitute a convex set, meaning that, if $\A$ and $\B$ are two measurements both having outcomes in the same set $X$, we can form their mixture $\Cc = \eta\A + (1-\eta)\B$ for any $\eta$ in the interval $[0,1]$. The mixture $\Cc$ is realized by measuring either $\A$ or $\B$ with respective probabilities $\eta$ and $1-\eta$, see Fig.~\ref{fig:mixture}. If we fix $\A$ as the reference measurement, $\Cc$ can be interpreted as a noisy version of $\A$, in which $\B$ is the noise and $\eta$ is the  
visibility 
of $\A$ in $\Cc$ \cite{DeFaKa19}.

\begin{figure}[h]
\includegraphics[scale=0.9]{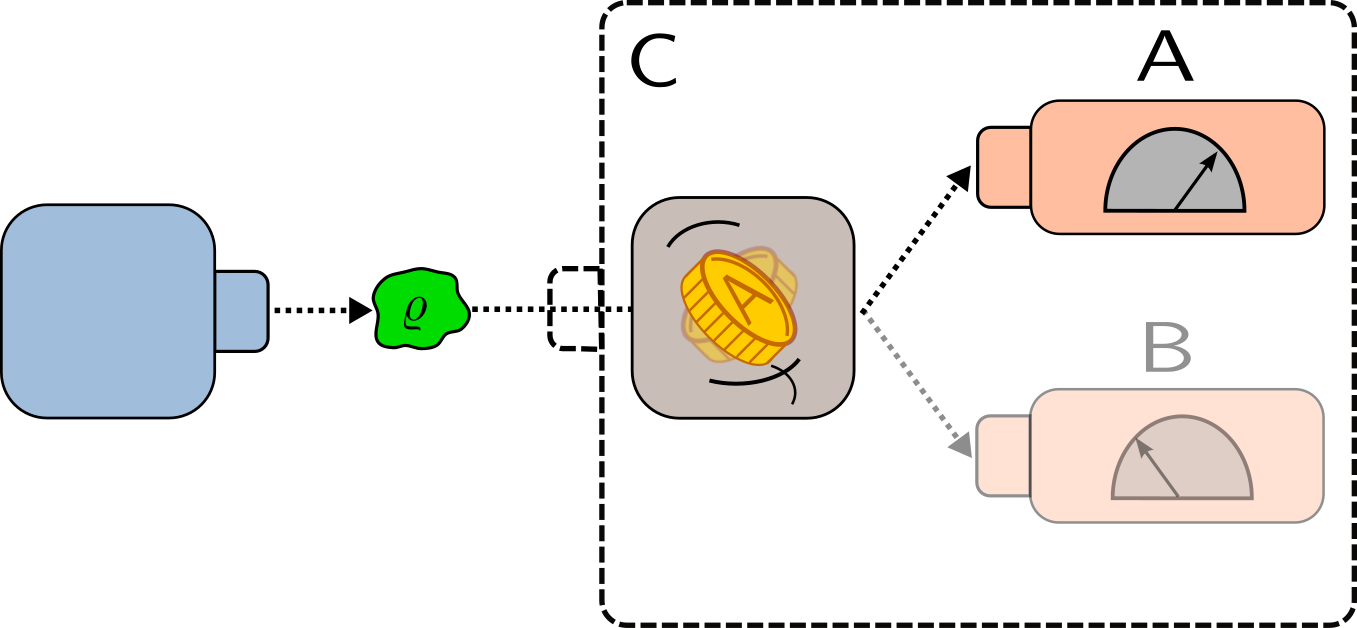}
\caption{In the mixture of two measurements, either measurement is randomly selected and then performed in the state of the system. For each measurement, the probability of being selected is the respective weight in the mixture. Only measurements sharing the same outcome set can be mixed.\label{fig:mixture}}
\end{figure}

Quantum incompatibility is a property involving two or more measurements, and it consists in the impossibility of deriving the statistics of all of them by post-processing the results of a single measurement. Formally, if $\A$ and $\B$ are measurements with respective outcome sets $X$ and $Y$, we say that $\A$ and $\B$ are {\em compatible} if they admit a {\em joint measurement} $\J$ with outcomes in the product set $X\times Y$ such that $\A(x) = \sum_{y\in Y} \J(x,y)$ and $\B(y) = \sum_{x\in X} \J(x,y)$ for all $x$ and $y$; otherwise, $\A$ and $\B$ are called {\em incompatible}. The case with more than two measurements is similar. We denote by $\Meas(X,Y)$ the set of all pairs of measurements $(\A,\B)$ with respective outcome sets $X$ and $Y$, and we write $\JMeas(X,Y)$ for the subset of those pairs in which $\A$ and $\B$ are compatible. The two sets $\Meas(X,Y)$ and $\JMeas(X,Y)$ are convex, that is, if $(\A,\B)$ and $(\Cc,\D)$ belong to either of them, then the same is true for the mixtures $(\eta\A+(1-\eta)\Cc,\eta\B+(1-\eta)\D)=: \eta\cdot(\A,\B) + (1-\eta)\cdot(\Cc,\D)$ for any any choice of $\eta\in [0,1]$.

A natural way to quantify the incompatibility content of a pair of measurements is by determining the minimal amount of noise which renders them compatible. Such an approach obviously depends on the choice of a noise model, that is, of all pairs of measurements $\ca{N}\subseteq\Meas(X,Y)$ which are regarded as noises affecting the reference measurements. Common choices for the set $\ca{N}$ are e.g.~all pairs of trivial \cite{BuHeScSt13,HeScToZi14} or uniform \cite{CaHeTo12} measurements; see \cite{DeFaKa19} for a detailed account of other possibilities. In the present paper, we fix as our noise model the full set $\ca{N}=\Meas(X,Y)$ of all pairs of measurements. The corresponding incompatibility quantifier is then the {\em incompatibility generalized robustness} $\eta^{\rm g}$, which is defined as
\begin{equation}\label{eq:def_eta}
\eta^{\rm g}(\A,\B) = \max\,\{\eta\in [0,1] \mid \eta\cdot(\A,\B) + (1-\eta)\cdot(\Cc,\D)\in\JMeas(X,Y)  \text{ for some } (\Cc,\D)\in\Meas(X,Y)\}
\end{equation}
for all pairs of measurements $(\A,\B)\in\Meas(X,Y)$. The geometric meaning of $\eta^{\rm g}(\A,\B)$ is depicted in Fig.~\ref{fig:inco_robu}; a detailed mathematical description can be found in \cite{Haapasalo15}. Here, it is enough to observe that the smaller is $\eta^{\rm g}(\A,\B)$, the more incompatible is the pair of measurements $(\A,\B)$, and that $\eta^{\rm g}(\A,\B)=1$ if and only if $(\A,\B)$ is a compatible pair. The quantity $\eta^{\rm g}(\A,\B)$ is directly related to the incompatibility robustness introduced in \cite{UoBuGuPe15}, which is defined as $\IR(\A,\B) = 1/\eta^{\rm g}(\A,\B)-1$ and decreases to $0$ as long as the pair $(\A,\B)$ becomes compatible. Our preference for using $\eta^{\rm g}(\A,\B)$ in place of the equivalent quantity $\IR(\A,\B)$ is due to the more immediate geometric interpretation.

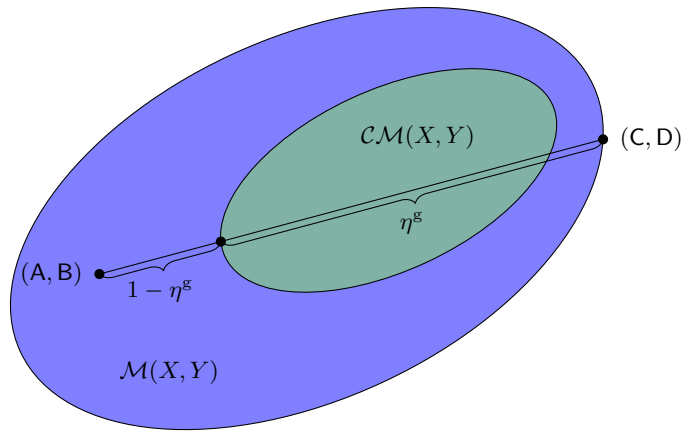
\begin{figure}[h]
\begin{tikzpicture}[scale = 1.2]

\begin{scope}[rotate=25]
\draw[name path global = ellisseA, fill = blue!50!white] (0,0) ellipse(3.5 and 2);
\draw[name path global = ellisseB, fill = darkgreen!50!white] (1,0) ellipse(2 and 1);
\path[name path global = retta] (-3.4,0.6) -- (3.4,-0.6);
\coordinate (JMeas) at (1.5,0.3);
\coordinate (Meas) at (-2.1,-0.9);
\end{scope}

\path[name intersections = {of = ellisseA and retta, by = {A,D}}];
\path[name intersections = {of = ellisseB and retta, by = {B,C}}];
\coordinate (M) at ($(A)!0.15!(D)$);

\draw[fill = black] (M)node[anchor = east, xshift = -3pt]{$(\A,\B)$} circle (0.05);
\draw[fill = black] (D)node[anchor = west, xshift = 3pt]{$(\Cc,\D)$} circle (0.05);
\draw[fill = black] (B) circle (0.05);

\draw (M) -- (D);
\draw[decorate,decoration={brace,amplitude=5pt}] (B) -- (M) node[midway, below = 5pt]{$1-\eta^{\rm g}$};
\draw[decorate,decoration={brace,amplitude=5pt}] (D) -- (B) node[midway, below = 5pt]{$\eta^{\rm g}$};

\draw (JMeas)node{$\JMeas(X,Y)$};
\draw (Meas)node{$\Meas(X,Y)$};
\end{tikzpicture}
\caption{The incompatibility generalized robustness $\eta^{\rm g}(\A,\B)$ is the maximal visibility 
of the pair of measurements $(\A,\B)$ in a compatible mixture. The noise model for $\eta^{\rm g}$ is the full set $\ca{N} = \Meas(X,Y)$ of all pairs of measurements $(\Cc,\D)$.\label{fig:inco_robu}}
\end{figure}

\subsection{State discrimination with intermediate information}\label{sec:sdw}
We now introduce the state discrimination task with intermediate information that is at the root of our strategy for experimentally assessing the incompatibility content of an arbitrary pair of measurements.

In the usual state discrimination task \cite{BaCr09,Bae13,BaKw15}, a quantum system is prepared in a state $\varrho_i$, which is randomly picked from a fixed collection of states $S=\{\varrho_1,\ldots,\varrho_n\}$. A measurement is then performed with outcomes in the set $\{1,\ldots,n\}$, and the objective is correctly guessing the value of $i$ with the outcome $i'$ given by the measurement. Clearly, in general the equality $i=i'$ can not be achieved with certainty. The best one can do is achieving it with the highest possible probability, and to this aim the measurement used for the task needs to be properly optimized.

In a state discrimination task with intermediate information \cite{GoWe10,CaHeTo18,CaHeTo21}, the above scenario is modified by adding one extra step to it. Namely, before the measurement is performed, the experimenter receives some classical partial information about the index $i$ of the unknown state, and he is then allowed to optimize the measurement accordingly. Thus, two or more different measurements can be used to detect the value of $i$, and the experimenter can switch from one to the other based on intermediate information.

In order to illustrate the latter scenario in more detail, we focus on the simplest case in which the original collection of states is partitioned in two subsets $S_a=\{\varrho_{x|a} \mid x\in X\}$ and $S_b=\{\varrho_{y|b} \mid y\in Y\}$, and intermediate information consists in knowing which subset $S_k$ the unknown state $\varrho_i$ belongs to. Here, $X$ and $Y$ are arbitrary label sets (reducing to the index set $\{1,\ldots,n\}$ in the usual scenario) and $K=\{a,b\}$ is the set containing intermediate information. We also fix two target measurements $\A$ and $\B$ with outcomes in $X$ and $Y$, respectively.
The experiment then consists in the steps below:
\begin{enumerate}[(i)]
\item choose either $k=a$ or $k=b$ with respective probabilities $q_a$ and $q_b$;
\item if $k=a$, pick the label $z\in X$ with probability $p_{z|a}$, otherwise pick the label $z\in Y$ with probability $p_{z|b}$;\label{it:step2}
\item prepare the system in the state $\varrho_{z|k}$;
\item depending on the chosen $k$, perform either the measurement $\A$ (if $k=a$) or $\B$ (if $k=b$) in the state $\varrho_{z|k}$;
\item compare the obtained measurement outcome $z'$ with the label $z$ picked in step \eqref{it:step2};
\item the experiment is successful whenever $z$ and $z'$ coincide.
\end{enumerate}
In the task described above, the success probability is 
\begin{equation}\label{eq:pre}
\PP(\en;\A,\B) = q_a \sum_{z\in X} p_{z|a}\, \tr{\varrho_{z|a}\, \A(z)} + q_b \sum_{z\in Y} p_{z|b}\, \tr{\varrho_{z|b}\, \B(z)} , 
\end{equation}
where the symbol $\en$ -- which we call a {\em partitioned state ensemble} -- collectively denotes the two sets of labeled states $S_a$ and $S_b$ together with the three probability distributions $\{p_{z|a}\mid z\in X\}$, 
$\{p_{z|b}\mid z\in Y\}$ and $\{q_k\mid k\in K\}$.

It is important to remark that the probability \eqref{eq:pre} can be empirically determined by collecting the success statistics obtained from many repetitions of the above experiment.

\subsection{Detection of incompatibility}\label{sec:doi}

In this section, we establish the connection between the incompatibility generalized robustness introduced in Section \ref{sec:inco_robu} and the state discrimination task described in Section \ref{sec:sdw}. We essentially follow \cite{SkSuCa19,UoKrShYuGu19}.

The success probability \eqref{eq:pre} easily allows to express an upper bound for the incompatibility generalized robustness \eqref{eq:def_eta}, which holds for all pairs of measurements $(\A,\B)\in\Meas(X,Y)$:
\begin{equation}\label{eq:main_0}
\eta^{\rm g}(\A,\B) \leq \frac{M(\en)}{\PP(\en;\A,\B)}\,.
\end{equation}
In the right hand side of this expression, the quantity
\begin{equation}\label{eq:Ppost}
M(\en) = \max\{\PP(\en;\Cc,\D) \mid (\Cc,\D)\in\JMeas(X,Y)\}
\end{equation}
only depends on the partitioned state ensemble and can be evaluated analytically. On the other hand, as we already pointed out, the probability $\PP(\en;\A,\B)$ is experimentally assessable by collecting the success statistics of many runs of the experiment.

The essential point proved in \cite{SkSuCa19,UoKrShYuGu19}, however, is that there always exists a particular choice of the partitioned state ensemble $\en$ for which the inequality \eqref{eq:main_0} actually becomes an equality. Such a choice, of course, depends on the measurements $\A$ and $\B$. We thus see that, if the preparation is suitably tuned, then the incompatibility generalized robustness can be determined in a state discrimination task with intermediate information.


\subsection{Unbiased qubit measurements}\label{sec:coq}
In our experiment, we focus on a qubit system $\hh=\Cb^2$ and we want to determine the incompatibility of two dichotomic measurements $\A$ and $\B$. For simplicity, we assume that the outcome sets of $\A$ and $\B$ coincide and $X=Y=\{+,-\}$.

As a remarkable fact, for a particular choice of the partitioned state ensemble $\en$, the inequality \eqref{eq:main_0} becomes an equality for a quite large class of measurements. Namely, for all $z\in\{+,-\}$ and $k\in\{a,b\}$, we set
\begin{equation}\label{eq:ensex}
p_{z|k} = q_k = \tfrac{1}{2}\,,\qquad\qquad \varrho_{z|k} = \tfrac{1}{2}\big(\id+z\,\vu_k\cdot\vsigma\big)
\end{equation}
with
\begin{equation}\label{eq:en_qubit2}
\vu_a=\tfrac{1}{\sqrt{2}}\big(\ve_1 + \ve_2\big)\,, \qquad\qquad \vu_b=\tfrac{1}{\sqrt{2}}\big(\ve_1 - \ve_2\big)\,.
\end{equation}
In these expressions, $\ve_1$, $\ve_2$ and $\ve_3$ are the unit vectors along the coordinate axes of $\Rb^3$, and for any vector $\vv = v_1\ve_1 + v_2\ve_2 + v_3\ve_3$ we set $\vv\cdot\vsigma = v_1\sigma_1 + v_2\sigma_2 + v_3\sigma_3$, where $\sigma_1$, $\sigma_2$ and $\sigma_3$ are the three Pauli matrices. For two fixed directions $\va$ and $\vb$ and real numbers $0<\alpha,\beta\leq 1$, we consider measurements of the form
\begin{equation}\label{eq:AB_qubit1}
\A_{\alpha\va}(z) = \tfrac{1}{2}\big(\id + z\,\alpha\,\va\cdot\vsigma\big)\,, \qquad\qquad\B_{\beta\vb}(z) = \tfrac{1}{2}\big(\id + z\,\beta\,\vb\cdot\vsigma\big)\,.
\end{equation}
These measurements constitute two noisy unbiased qubit measurements along $\va$ and $\vb$ with possibly different noise parameters $\alpha$ and $\beta$. The smaller are the parameters $\alpha$ and $\beta$, the more intense is the noise in the respective measurements. In the noiseless case $\alpha=\beta=1$, the measurements $\A_{\alpha\va}$ and $\B_{\beta\vb}$ are projective.

\begin{figure}[h]
\includegraphics[width=.5\columnwidth]{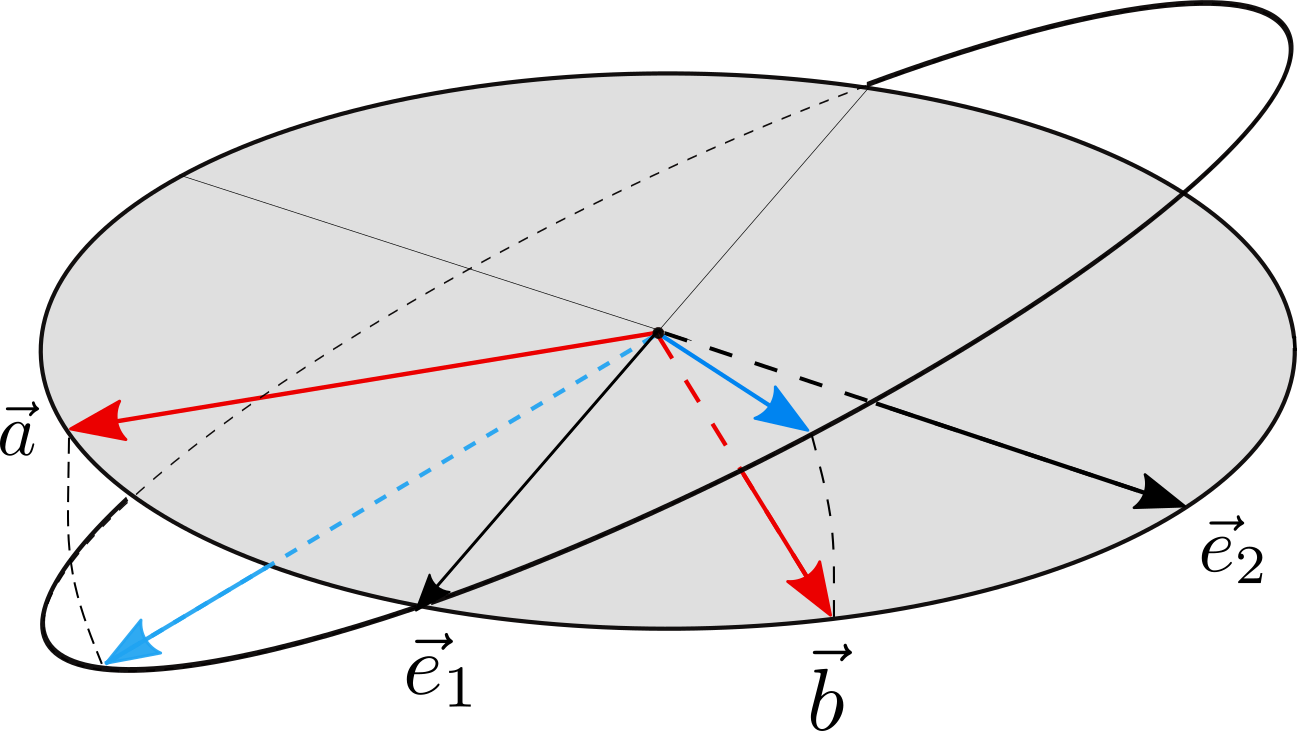}

\caption{Geometric representation of the directions identifying the measurements \eqref{eq:AB_qubit1} considered in the experiment. The red arrows represent the vectors $\va$ and $\vb$ defined in \eqref{eq:AB_qubit2}, which are symmetric with respect to $\ve_1$ and span the angle $2\theta$. The blue arrows represent the vectors \eqref{eq:rotated_ab} obtained by tilting the previous vectors $\va$ and $\vb$ by the angle $\phi$ around $\ve_1$.} \label{fig:axis}
\end{figure}

For the partitioned state ensemble defined in \eqref{eq:ensex}-\eqref{eq:en_qubit2}, the maximum \eqref{eq:Ppost} was evaluated in \cite{GoWe10,CaHeTo18} and found to be
\begin{equation}
M(\en) = \frac{1}{2}\left(1+\frac{1}{\sqrt{2}}\right) \,,
\end{equation}
so that the quantity on the right hand side of \eqref{eq:main_0} reduces to
\begin{equation}\label{eq:chiab}
\chi(\en;\A,\B) = \frac{\sqrt{2}+1}{2\sqrt{2}\,\PP(\en;\A,\B)}\,.
\end{equation}
Remarkably, it turns out that the bound \eqref{eq:main_0} is saturated by the measurements \eqref{eq:AB_qubit1} when $\va$ and $\vb$ lie in the plane spanned by $\ve_1$ and $\ve_2$, they are symmetric around $\ve_1$, and we further enforce the same noise parameter in the two measurements (see Fig.~\ref{fig:axis}). More precisely, for
\begin{equation}\label{eq:AB_qubit2}
\va = \cos\theta\,\ve_1 + \sin\theta\,\ve_2\,,\qquad\qquad\vb = \cos\theta\,\ve_1 - \sin\theta\,\ve_2\,,\qquad\qquad 0<\theta\leq\frac{\pi}{2}\,,
\end{equation}
we have
\begin{equation}\label{eq:main_1}
\eta^{\rm g}(\A_{\gamma\va},\B_{\gamma\vb}) = \chi(\en;\A_{\gamma\va},\B_{\gamma\vb})
\end{equation}
provided that the noise parameter $\gamma$ satisfies
\begin{equation}\label{eq:gamma}
\frac{1}{\cos\theta+\sin\theta} < \gamma \leq 1
\end{equation}
(the proof can be found in \cite{DeFaKa19} for the case $\gamma=1$ and in the appendix for $\gamma<1$). For the other values of $\va$, $\vb$, $\alpha$ and $\beta$, the inequality \eqref{eq:main_0} still holds true and yields
\begin{equation}\label{eq:main_2}
\eta^{\rm g}(\A_{\alpha\va},\B_{\beta\vb}) \leq \chi(\en;\A_{\alpha\va},\B_{\beta\vb})  \,.
\end{equation}
In particular, the inequality is trivial when $\A_{\alpha\va}$ and $\B_{\beta\vb}$ are compatible, since in this case $\eta^{\rm g}(\A_{\alpha\va},\B_{\beta\vb}) = 1$ and $\PP(\en;\A_{\alpha\va},\B_{\beta\vb}) \leq M(\en)$ by definition. This happens e.g.~when 
\begin{equation}\label{eq:triv}
\alpha = \beta\leq \frac{1}{\cos\theta+\sin\theta},
\end{equation}
which motivates the constraint on $\gamma$ given in \eqref{eq:gamma}. 

Summarizing, the success probability in the partitioned state ensemble \eqref{eq:ensex}-\eqref{eq:en_qubit2} is the quantity we are going to estimate through our experiments, as it allows to evaluate the incompatibility of the measurements $\A_{\alpha\va}$ and $\B_{\beta\vb}$ via \eqref{eq:chiab} and \eqref{eq:main_1}, or at least upper bound it via \eqref{eq:main_2}, depending on the values of the noise parameters $\alpha$ and $\beta$ and on the directions $\vec{a}$ and $\vec{b}$.

\section{Experimental procedure and results}\label{sec:the}

\subsection{Experimental apparatus}\label{sec:ea}
\begin{figure}
  \includegraphics[width=.9\columnwidth]{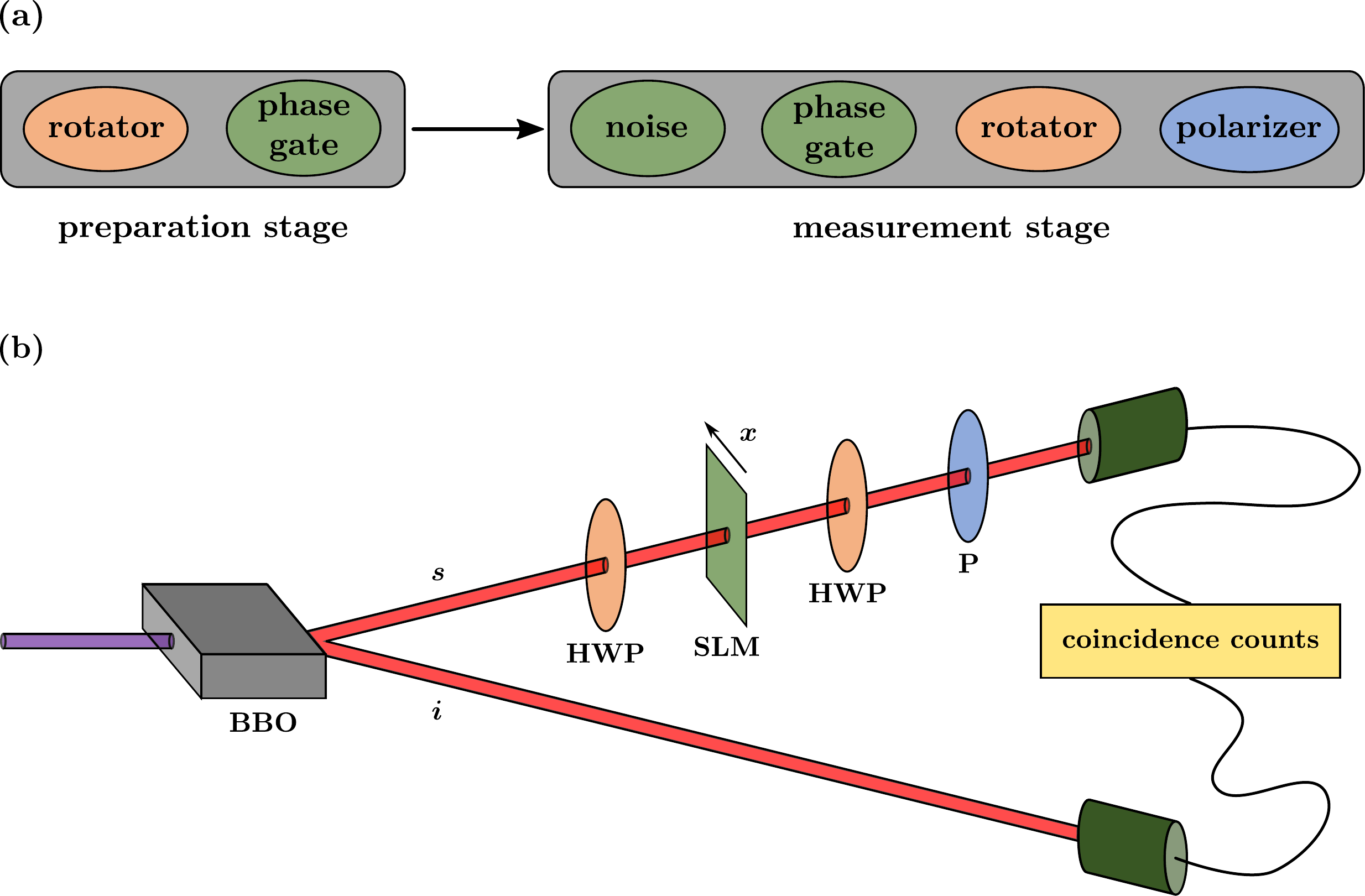}
  \caption{\textbf{(a)} Logical scheme of the experiment. The system goes through state preparation and measurement stages. In the preparation stage, a rotation and a phase gate operation generate a pure polarization state, starting from a horizontally polarized photon. Analogously, in the measurement stage, a phase combined with another rotation and the action of a polarizer on the horizontal direction realize a polarization projective measurement. Possibly, the presence of phase diffusion leads to noisy measurements (see \eqref{eq:nnoisy}). \textbf{(b)} Sketch of the experimental apparatus. The signal photon (\textit{s}) exiting the non-linear crystal (BBO) is first rotated by means of a half-wave plate (HWP). The phase-gate operations as well as the introduction of noise are then realized via a spatial light modulator (SLM) affecting the photon phase with control in the $x$ direction. A second half-wave plate (HWP) and polarizer (P) fully determine the measurement.}\label{fig:app}
\end{figure}

The logical scheme of our experiment is illustrated in Fig.~\ref{fig:app}(a).
The two main parts are the state preparation stage,
where the input states are generated by a polarization rotator and a phase gate, 
and the measurement stage,
which consists of a phase gate, a phase diffusion, a
polarization rotator and finally a projection element realized by a polarizer.
By using a polarization rotator and a phase gate, it is possible to generate a pure state of arbitrary polarization starting from a single photon with horizontal polarization. This allows us to obtain
the 4 states \eqref{eq:ensex}-\eqref{eq:en_qubit2} with a high degree of purity (as shown in the next section), as well as to realize general projective measurements, in particular the
measurements \eqref{eq:AB_qubit1} for $\alpha=\beta=1$. 
Denoting by $\ket{H}$ and $\ket{V}$ the horizontal and vertical polarization vectors and setting
\begin{equation}\label{eq:ppure}
\ket{\psi_{z|a} (\theta)} = \frac{1}{\sqrt{2}}\left(\ket{H}+ z \rme^{i \theta}\ket{V}\right)\,,\qquad\qquad \ket{\psi_{z|b} (\theta)} = \frac{1}{\sqrt{2}}\left(\ket{H}+ z \rme^{-i \theta}\ket{V}\right)
\end{equation}
for $z\in\{+,-\}$, the 4 pure states \eqref{eq:ensex}-\eqref{eq:en_qubit2} are realized as
\begin{equation}\label{eq:equiv}
\varrho_{z|a} = \kb{\psi_{z|a}(\pi/4)}{\psi_{z|a}(\pi/4)} \,, \qquad\qquad \varrho_{z|b} = \kb{\psi_{z|b}(\pi/4)}{\psi_{z|b}(\pi/4)} \,.
\end{equation}
Concerning the measurements \eqref{eq:AB_qubit1} with $\alpha=\beta=1$, 
we first choose $\va$ and $\vb$ as in \eqref{eq:AB_qubit2},
which corresponds to $\A_\va$ and $\B_\vb$ being the projective measurements in the bases $\left\{\ket{\psi_{+|a}(\theta)}, \ket{\psi_{-|a}(\theta)}\right\}$ and $\left\{\ket{\psi_{+|b}(\theta)}, \ket{\psi_{-|b}(\theta)}\right\}$, respectively. 
In addition, we also consider the case where $\va$ and $\vb$ are rotated around $\vec{e}_1$ by an angle $\phi$, while the angle between them remains fixed and equal to $2\theta=\pi/2$. In the latter case,
\begin{equation}\label{eq:rotated_ab}
\va = \frac{1}{\sqrt{2}} \left(\ve_1 + \cos\phi\,\ve_2 + \sin\phi\,\ve_3\right)\,,\qquad\qquad \vb = \frac{1}{\sqrt{2}} \left(\ve_1 - \cos\phi\,\ve_2 - \sin\phi\,\ve_3\right)\,,
\end{equation}
and $\A_\va$ and $\B_\vb$ are the projective measurements in the bases $\left\{\ket{\varphi_{+|a}(\phi)}, \ket{\varphi_{-|a}(\phi)}\right\}$ and $\left\{\ket{\varphi_{+|b}(\phi)}, \ket{\varphi_{-|b}(\phi)}\right\}$ respectively, where
\begin{equation}\label{eq:tiltm}
\ket{\varphi_{z|a}(\phi)} = \sqrt{\eta_z(\phi)}\ket{H} + z\rme^{i \varsigma_z(\phi)}\sqrt{1-\eta_z(\phi)} \ket{V}\,,\qquad\qquad \ket{\varphi_{z|b}(\phi)} = \sqrt{1-\eta_z(\phi)}\ket{H} + z\rme^{-i \varsigma_z(\phi)}\sqrt{\eta_z(\phi)} \ket{V}
\end{equation}
with
\begin{equation}\label{eq:cases}
\eta_z(\phi) = \frac{1}{2}\left( 1+ \frac{z\sin\phi}{\sqrt{2}}\right)\,,\qquad\qquad \varsigma_z(\phi)=\arctan\cos\phi\,.
\end{equation}
In the experiment, the phases $\varsigma_z(\phi)$ and the
amplitudes $\eta_z(\phi)$ are set by the phase gate and
rotator in the projection stage. 
Finally, it is possible to implement the noisy measurements $\A_{\alpha\va}$ and $\B_{\beta\vb}$ by adding a phase diffusion. This results in a mixture of the measurements $\A_\va$ and $\B_\vb$ with the uniform trivial measurement, i.e., 
the measurement that outputs $+$ or $-$ with equal probabilities and independently of the system state:
\begin{equation}\label{eq:nnoisy}
\A_{\alpha\va} = \alpha\A_\va + (1-\alpha) \tfrac 12 \id\,,\qquad\qquad \B_{\beta\vb} = \beta\B_\vb + (1-\beta) \tfrac 12 \id\,.
\end{equation}

The actual experimental implementation of the scheme described above is depicted in Fig.~\ref{fig:app}(b). Twin photons are generated via parametric down-conversion by a $1\,{\rm mm}$ thick beta-Barium Borate non-linear crystal (BBO), which is pumped with a radiation at $405\,{\rm nm}$ generated by a laser diode. One of the two photons (\textit{i}=idler) is merely detected.
The other photon (\textit{s}=signal) is used to implement the procedure illustrated in Fig.~\ref{fig:app}(a).  
First, its polarization is rotated by the proper angle by means of a half-wave plate (HWP). Then, the phase gate is implemented by a spatial light modulator (SLM), 
which is a 1D liquid crystal mask (640 pixel, $100\,\mu {\rm m}$\,/\,pixel).
The SLM is controlled by computer along the $x$ direction and it is used to introduce the desired phase for each pixel.
By means of a single SLM, it is possible to implement both the phase gate operation in the preparation stage and the phase gate as well as
the phase diffusion in the projection stage. 
The two phase gate operations are performed by setting all the pixels to the same proper value in the signal part of the mask.
On the other hand, in order to implement the phase diffusion, we add a random 
phase that is different for each pixel. 
In particular, we use a random function consisting in a flat distribution between the two values 
$-\ell/2$ and $\ell/2$, so that the
mixing factors $\alpha$ and $\beta$ of the noisy measurements \eqref{eq:nnoisy} are functions of $\ell$.
Finally, we use a second HWP and a polarizer (P) with axis on the horizontal direction to complete
the projection stage. In the first part of the experiment, these last two devices allow us to select among the projective measurements in the bases $\left\{\ket{\psi_{+|a}(\theta)}, \ket{\psi_{-|a}(\theta)}\right\}$ and $\left\{\ket{\psi_{+|b}(\theta)}, \ket{\psi_{-|b}(\theta)}\right\}$, while in the second part we use them to set the proper values of  $\varsigma_z(\phi)$ and 
$\eta_z(\phi)$ in (\ref{eq:cases}) and to select among
$\left\{\ket{\varphi_{+|a}(\phi)}, \ket{\varphi_{-|a}(\phi)}\right\}$ and $\left\{\ket{\varphi_{+|b}(\phi)}, \ket{\varphi_{-|b}(\phi)}\right\}$.
The two HWPs are rotated by step motors controlled by computer. 
The coincidence counts are detected by two home-made single photon detectors.


\subsection{Measurement of the success probability}\label{sec:mot}

\begin{figure}[t]
 \includegraphics[width=0.45\columnwidth]{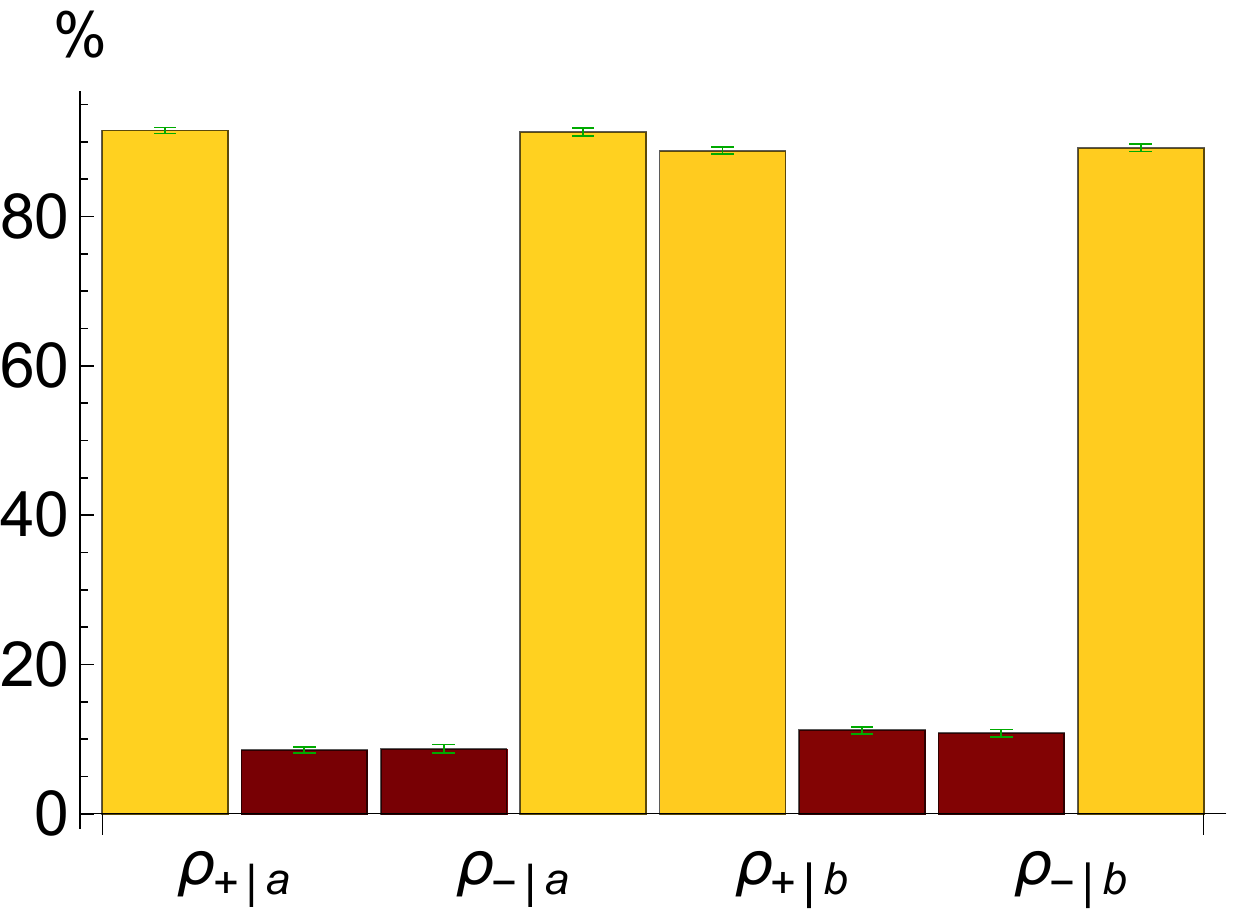}
\hspace{0.3cm}\includegraphics[width=0.45\columnwidth]{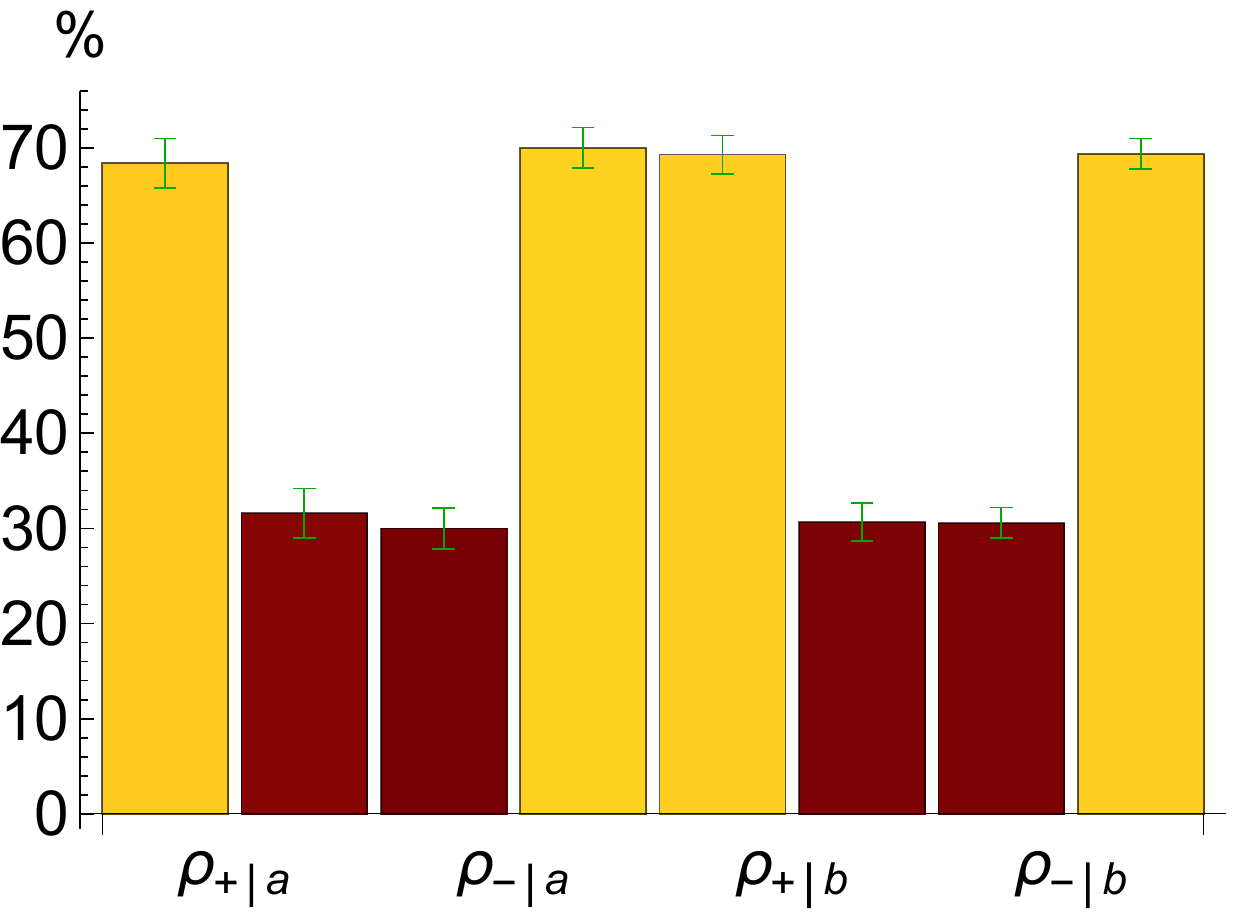}
\caption{Percentage of photon counts for the 4 prepared states 
$\varrho_{+|a},\varrho_{-|a},\varrho_{+|b},\varrho_{-|b}$ in \eqref{eq:ensex}-\eqref{eq:en_qubit2} and \eqref{eq:equiv}. For each state, the two bars are the percentages
associated with the two possible outcomes (respectively, $+$ and $-$)
of the corresponding measurement, which is $\A_{\alpha\va}$
for the states $\varrho_{+|a}$, $\varrho_{-|a}$, and 
$\B_{\beta\vb}$ for the states $\varrho_{+|b}$, $\varrho_{-|b}$. The vectors $\va$ and $\vb$ are given by \eqref{eq:AB_qubit2} with $\theta=10^\circ$. {\textbf{Left:}} projective measurements with $\alpha=\beta=1$. {\textbf{Right:}} noisy measurements with $\alpha=\beta=0.6$.
By \eqref{eq:pre}, the success probability $P(\en;\A_{\alpha\va}, \B_{\beta\vb})$ is the equally weighted convex sum of the 4 empirical frequencies in dark red color.}\label{fig:hist} 
\end{figure}

As explained in Section \ref{sec:coq}, the incompatibility generalized robustness of the two qubit measurements $\A_{\alpha\va}$ and $\B_{\beta\vb}$ can be assessed or at least upper bounded by knowing the value of the success probability $\PP(\en;\A_{\alpha\va},\B_{\beta\vb})$ for the partitioned state ensemble \eqref{eq:ensex}-\eqref{eq:en_qubit2}.
Here, we recall that $\PP(\en;\A_{\alpha\va},\B_{\beta\vb})$ is obtained either by performing the measurement $\A_{\alpha\va}$ in a state randomly chosen with uniform probability among the two states $\varrho_{+|a}, \varrho_{-|a}$ of the first subensemble, or by measuring $\B_{\beta\vb}$ in one of the two states of the second subensemble, chosen again with uniform probability among $\varrho_{+|b}, \varrho_{-|b}$. The two alternative preparation and measurement procedures are also equally probable. We get a success each time the outcome $z'$ of 
$\A_{\alpha\va}$ (respectively, $\B_{\beta\vb}$) coincides with the label $z$ of the prepared state $\varrho_{z|a}$ (resp., $\varrho_{z|b}$). The resulting probability $\PP(\en;\A_{\alpha\va},\B_{\beta\vb})$ is then half the sum of the occurrence probabilities of the latter coincidences.

In our setup, the success probabilities of the two alternative procedures are determined by the photon counts in the presence of the respective states and measurements.
In Fig.~\ref{fig:hist}, we report the photon counts along with their uncertainties for two exemplary pairs of measurements, namely, the two projective measurements $\A_\va$ and $\B_\vb$ with $\va$ and $\vb$ given by \eqref{eq:AB_qubit2} for $\theta = 10^\circ$, and their uniformly noisy versions \eqref{eq:nnoisy} with equal noise parameters $\alpha=\beta=0.6$.
Among the 8 resulting frequencies (two possible outcomes for each initial state), only 4 ones contribute to the definition of $\PP(\en;\A_{\alpha\va},\B_{\beta\vb})$.
We note that, in the case of projective measurements, the uncertainties are dominated
by the statistical fluctuations associated with the number of photons, since these fluctuations are very close to the shot noise distribution.
From our data, we can also deduce a purity of 0.985.
On the other hand, in the presence of a noisy measurement, we have a further contribution
that is due to the uncertainty introduced by
the assessment of the actual value of the equal noise $\gamma = \alpha = \beta$. As mentioned above, such noise is implemented via random
phases introduced by the SLM, whose discrete grid of pixels induces an
uncertainty on the value of the parameter $\gamma$
that increases as the noise becomes larger. This leads to deviations from the shot-noise distribution that are progressively more significant for larger values of the noise.

\subsection{Incompatibility generalized robustness for equally noisy measurements}
\begin{figure}[t]
 \includegraphics[width=0.45\columnwidth]{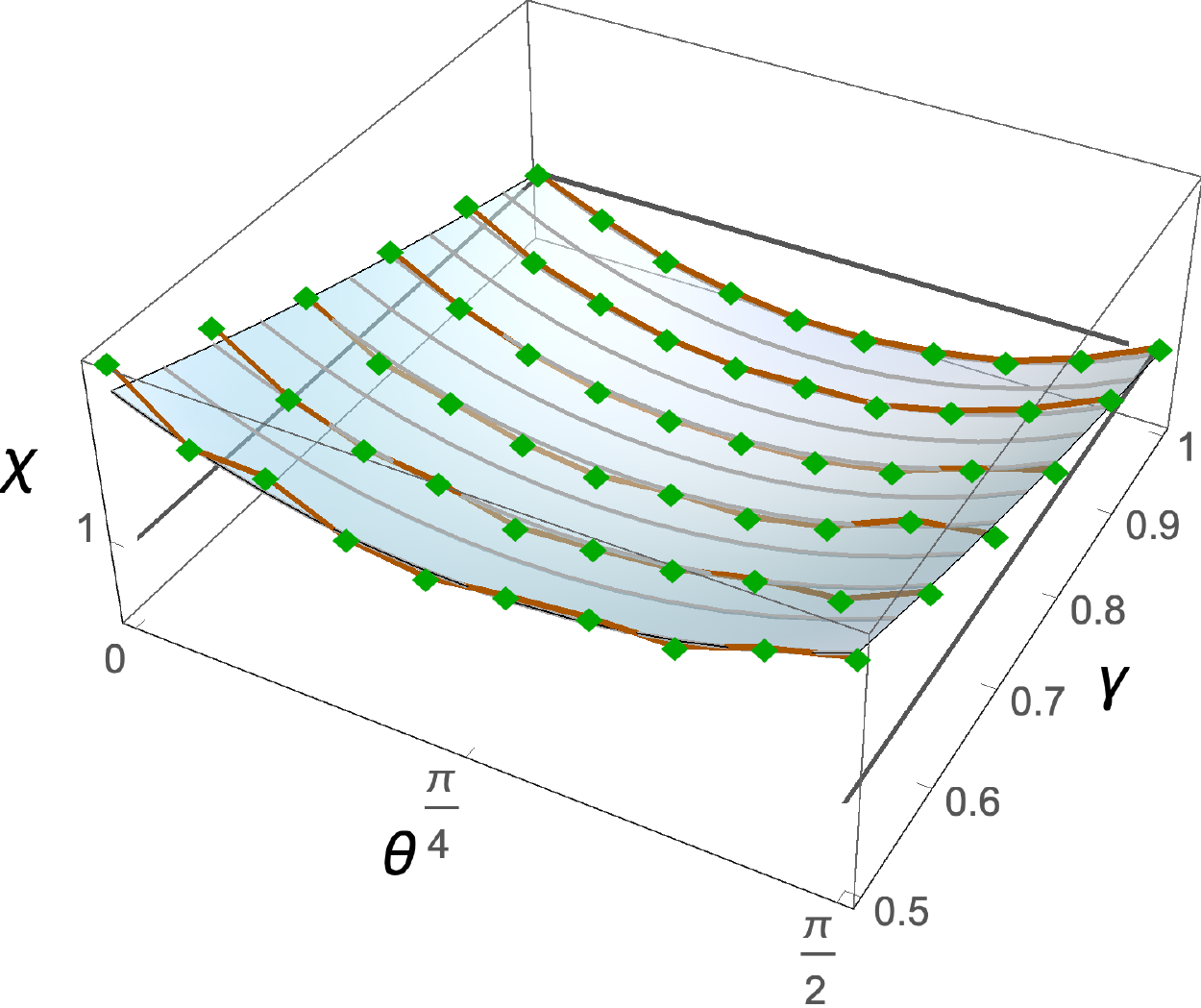}
\textbf{(a)}
 \includegraphics[width=0.45\columnwidth]{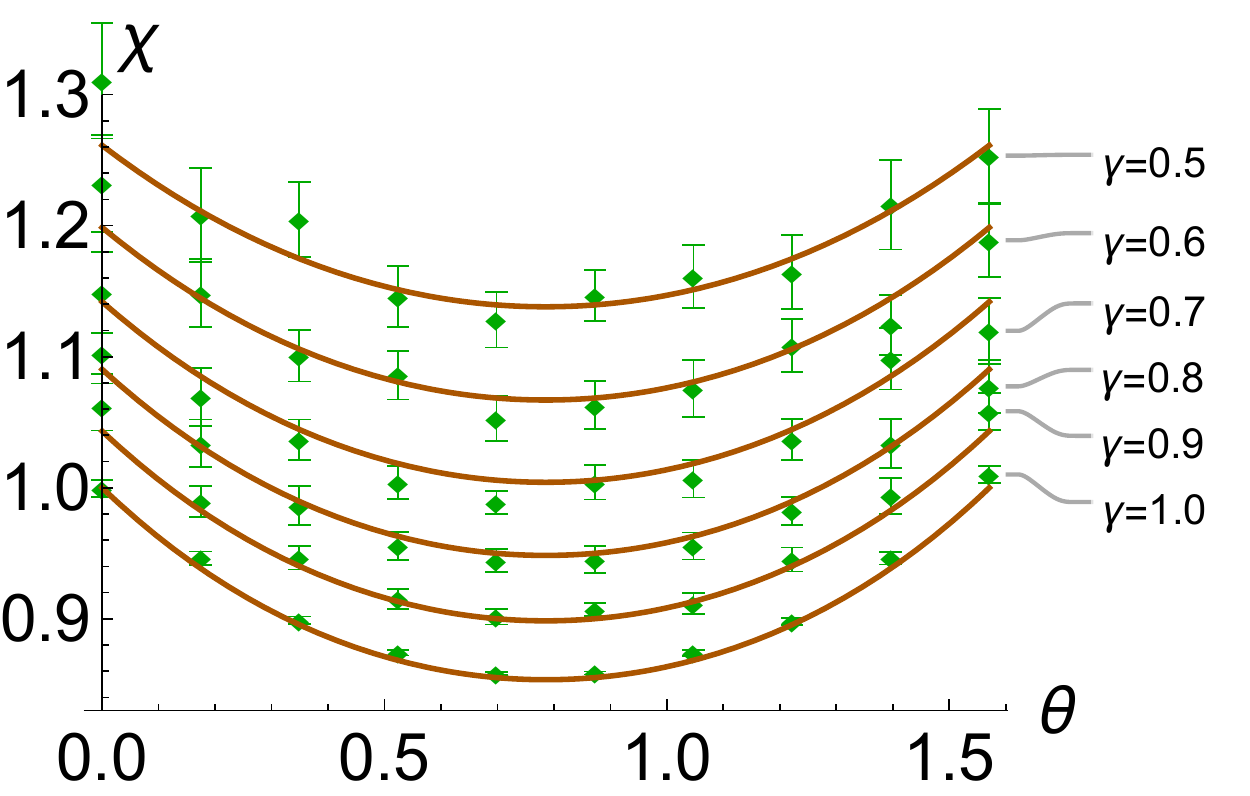}
\textbf{(b)}
\caption{The empirically assessable quantity $\chi(\en;\A_{\gamma\va},\B_{\gamma\vb})$ defined in \eqref{eq:chiab} and evaluated for the measurements $\A_{\gamma\va}$ and $\B_{\gamma\vb}$ with $\va = \cos\theta\,\ve_1 + \sin\theta\,\ve_2$ and $\vb = \cos\theta\,\ve_1 - \sin\theta\,\ve_2$. These measurements are obtained by mixing the uniform trivial measurement with the projective measurements $\A_\va$ and $\B_\vb$ in the respective bases $\left\{\ket{\psi_{+|a}(\theta)}, \ket{\psi_{-|a}(\theta)}\right\}$ and $\left\{\ket{\psi_{+|b}(\theta)}, \ket{\psi_{-|b}(\theta)}\right\}$, see \eqref{eq:ppure}, \eqref{eq:nnoisy}.
As discussed in Sec.~\ref{sec:coq}, the quantity $\chi(\en;\A_{\gamma\va},\B_{\gamma\vb})$ coincides with the incompatibility generalized robustness $\eta^{\rm g}(\A_{\gamma\va},\B_{\gamma\vb})$ 
whenever its value lies below $1$, while $\chi(\en;\A_{\gamma\va},\B_{\gamma\vb}) \geq 1$ implies that the measurements $\A_{\gamma\va}$ and $\B_{\gamma\vb}$ are compatible. 
{\textbf{(a)}} The quantity $\chi(\en;\A_{\gamma\va},\B_{\gamma\vb})$ as a function of $\gamma$ and $\theta$. The semi-transparent surface is the theoretical prediction \eqref{eq:tpr}, while the lozenges represent the experimental data. The solid lines connecting the lozenges at fixed values of $\gamma$ only serve as a guide to the reader's eye. {\textbf{(b)}} Sections of the plot on the left for different values of $\gamma$. Here, the experimental data are reported with their error bars and the solid lines represent the theoretical predictions.}\label{fig:equal}
\end{figure}


After experimentally determining the success probability $\PP(\en;\A_{\alpha\va},\B_{\beta\vb})$, we can evaluate the quantity $\chi(\en;\A_{\alpha\va},\B_{\beta\vb})$ defined in \eqref{eq:chiab}. We begin with the case of equally noisy measurements, for which we fix $\alpha=\beta = \gamma$. Moreover, we first assume that the unit vectors $\va$ and $\vb$ are spanned by $\ve_1$ and $\ve_2$ and are symmetric around $\ve_1$ as in \eqref{eq:AB_qubit2} (see Fig.~\ref{fig:axis}). In this case, by the discussion of Sec.~\ref{sec:coq}, the incompatibility generalized robustness $\eta^{\rm g}(\A_{\gamma\va},\B_{\gamma\vb})$ coincides with $\chi(\en;\A_{\gamma\va},\B_{\gamma\vb})$ for all values of the noise parameter $\gamma$ within the range \eqref{eq:gamma}. The latter values are exacty those that render the two measurements $\A_{\gamma\va}$ and $\B_{\gamma\vb}$ incompatible. On the other hand, for the partitioned state ensemble \eqref{eq:ensex}-\eqref{eq:en_qubit2}, the quantity $\chi(\en;\A_{\gamma\va},\B_{\gamma\vb})$ can be easily evaluated explicitly, resulting in
\begin{equation}\label{eq:tpr}
\chi(\en;\A_{\gamma\va},\B_{\gamma\vb}) = \frac{\sqrt{2}+1}{\sqrt{2}+\gamma\,(\cos\theta+\sin\theta)}\,. 
\end{equation}
Therefore, by comparing this formula with the experimental data, we empirically verify the value of $\eta^{\rm g}(\A_{\gamma\va},\B_{\gamma\vb})$ predicted by the theory when $\A_{\gamma\va}$ and $\B_{\gamma\vb}$ are incompatible. For the values of $\gamma$ below the interval \eqref{eq:gamma}, the two measurements are compatible, hence $\eta^{\rm g}(\A_{\gamma\va},\B_{\gamma\vb})=1$ and accordingly we must find $\chi(\en;\A_{\gamma\va},\B_{\gamma\vb})\geq 1$.    

The experimental results are shown in Fig.~\ref{fig:equal} and confirm the theoretical predictions. In Fig.~\ref{fig:equal}(a) the semi-transparent surface is the theoretical value \eqref{eq:tpr} of $\chi(\en;\A_{\gamma\va},\B_{\gamma\vb})$, plotted as a function of $\gamma$ and $\theta$. The lozenges represent the results of the experiment, obtained from the photon counts described in Sec.~\ref{sec:mot}. 
In Fig.~\ref{fig:equal}(b) the 2D graph displays $\chi(\en;\A_{\gamma\va},\B_{\gamma\vb})$ as a function of $\theta$ for several fixed values of $\gamma$, and it provides a clear-cut comparison with the experimental points and their error bars. We observe the agreement between the experimental data and the theoretical predictions
for all considered range of parameters. As it clearly appears,  the experimental uncertainties increase for small values of the parameter $\gamma$, i.e., for more intense noise in the measurements, because the fluctuations of $\gamma$ become larger due to the finite pixel resolution of the
SLM.
We already remarked that in the present situation the measurements $\A_{\gamma\va}$ and $\B_{\gamma\vb}$ are compatible exactly when $\chi(\en;\A_{\gamma\va},\B_{\gamma\vb})\geq 1$. In the noiseless case with $\gamma=1$, this happens only 
at the extreme values $\theta=0$ and $\theta=\pi/2$, where the projective measurements $\A_\va$ and $\B_\vb$ coincide (for $\theta=0$) or coincide up to swapping their outcomes (for $\theta=\pi/2$).
For values of $\gamma$ smaller than $1$, the range of angles $\theta$ for which $\A_{\gamma\va}$ and $\B_{\gamma\vb}$ are compatible progressively enlarges. On the other hand, for $\chi(\en;\A_{\gamma\va},\B_{\gamma\vb})<1$, the two measurements are incompatible and $\chi(\en;\A_{\gamma\va},\B_{\gamma\vb})$
coincides with $\eta^{\rm g}(\A_{\gamma\va},\B_{\gamma\vb})$.
Independently of the value of $\gamma$, the incompatibility generalized robustness has a minimum for $\theta=\pi/4$,
where the measurements are then maximally incompatible. 
The most incompatible measurements are indeed the projective measurements with $\theta=\pi/4$, 
for which $\eta^{\rm g}(\A_{\vu_a},\B_{\vu_b})=(2+\sqrt{2})/4\approx 0.85$. Regardless of $\theta$, the incompatibility of $\A_{\gamma\va}$ and $\B_{\gamma\vb}$ reduces by decreasing the noise parameter $\gamma$.

\begin{figure}[t]
 \includegraphics[width=0.45\columnwidth]{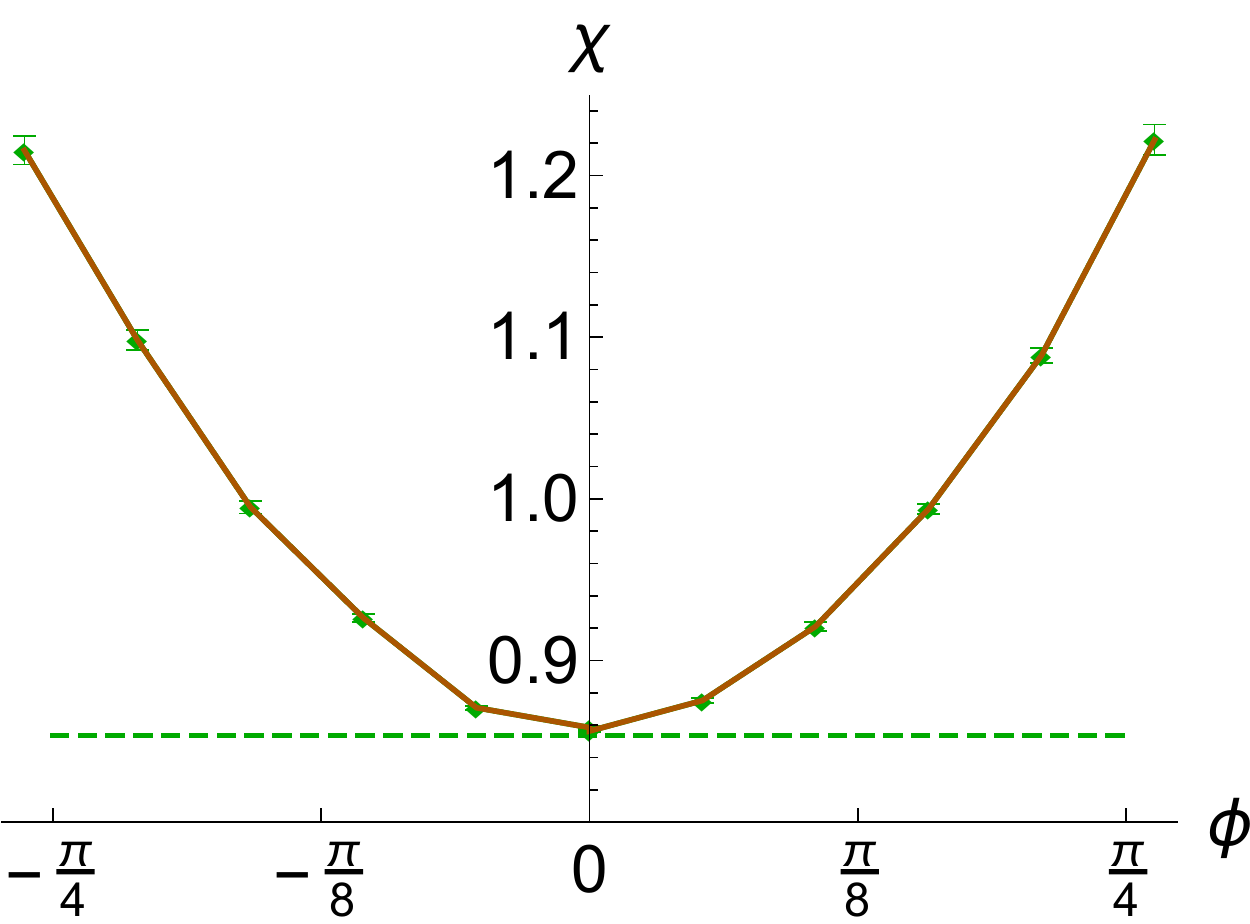}
\caption{The quantity $\chi(\en;\A_\va,\B_\vb)$ plotted as a function of $\phi$ for the projective measurements $\A_\va$ and $\B_\vb$ in the respective bases $\left\{\ket{\varphi_{+|a}(\phi)}, \ket{\varphi_{-|a}(\phi)}\right\}$ and $\left\{\ket{\varphi_{+|b}(\phi)}, \ket{\varphi_{-|b}(\phi)}\right\}$ (see \eqref{eq:tiltm}). Now, the vectors $\va$ and $\vb$ are given by \eqref{eq:rotated_ab} and obtained from the unit vectors $\vu_a$ and $\vu_b$ of \eqref{eq:en_qubit2} after a rotation by an angle $\phi$ around $\ve_1$ (see Fig.~\ref{fig:axis}). By \eqref{eq:main_2}, the quantity $\chi(\en;\A_\va,\B_\vb)$ constitutes an upper bound for the incompatibility generalized robustness $\eta^{\rm g}(\A_\va,\B_\vb)$. The experimental data are indicated by the blue dots and reported with their error bars, while the solid lines between one point and another are only guides for the reader's eye. The dashed line represents $\eta^{\rm g}(\A_\va,\B_\vb)$ for the chosen measurements. It does not depend on $\phi$ and coincides with $\chi(\en;\A_\va,\B_\vb)$ only for $\phi=0$.}\label{fig:sx}
\end{figure}

In the second part of the experiment, we consider the projective measurements $\A_\va$ and $\B_\vb$, in which $\va$ and $\vb$ are the unit vectors \eqref{eq:rotated_ab}. These vectors are obtained from the unit vectors $\vu_a$ and $\vu_b$ of \eqref{eq:en_qubit2} after a rotation by an angle $\phi$ around $\ve_1$ (see Fig.~\ref{fig:axis}). Since the generalized incompatibility robustness is invariant under unitary conjugation, the value of $\eta^{\rm g}(\A_\va,\B_\vb)$ coincides with $\eta^{\rm g}(\A_{\vu_a},\B_{\vu_b})$ and does not depend on the angle $\phi$. The quantity $\chi(\en;\A_\va,\B_\vb)$, however, varies with $\phi$, as it is confirmed by its experimental values depicted in Fig.~\ref{fig:sx} for a range of angles around $\phi=0$. 
Therefore, according to \eqref{eq:main_2}, in the present situation  $\chi(\en;\A_\va,\B_\vb)$ only provides an upper bound for the incompatibility generalized robustness $\eta^{\rm g}(\A_\va,\B_\vb)$. Such a bound is attained solely in the special case with $\phi=0$, which thus represents an optimal point for the assessment of $\eta^{\rm g}(\A_\va,\B_\vb)$ within the considered family of measurements.

\subsection{The case of unequally noisy measurements}

Up to now, we only considered pairs of equally noisy qubit measurements. For these measurements, we managed to find the analytic expression of the incompatibility generalized robustness $\eta^{\rm g}(\A_{\alpha\va},\B_{\beta\vb})$, so that we could compare it with the upper bound $\chi(\en;\A_{\alpha\va},\B_{\beta\vb})$ and then determine the latter bound in our experiments. Now, we move one step further, and deal with a situation where we lack an analytic formula for $\eta^{\rm g}(\A_{\alpha\va},\B_{\beta\vb})$, but we can still use the experimental estimate of $\chi(\en;\A_{\alpha\va},\B_{\beta\vb})$ to provide an upper bound via inequality \eqref{eq:main_2}.

As in the first part of the experiment, we consider two noisy measurements $\A_{\alpha\va}$ and $\B_{\beta\vb}$ with $\va = \cos\theta\,\ve_1 + \sin\theta\,\ve_2$ and $\vb = \cos\theta\,\ve_1 - \sin\theta\,\ve_2$, but now we no longer assume that the noise parameters $\alpha$ and $\beta$ are equal.
Also in this case, the upper bound $\chi(\en;\A_{\alpha\va},\B_{\beta\vb})$ defined in \eqref{eq:chiab} can be easily calculated, and one finds that the result is still given by \eqref{eq:tpr} up to substituting $\gamma=(\alpha+\beta)/2$. In particular, $\chi(\en;\A_{\alpha\va},\B_{\beta\vb})$ only depends on the angle $\theta$ and the sum of the noise parameters $\alpha+\beta$. In Fig.~\ref{fig:unequal}, we report the theoretical prediction of $\chi(\en;\A_{\alpha\va},\B_{\beta\vb})$ as a function of $\theta$ and $\gamma$ and we compare it with the experimental data. The lozenges represent the result of the experiment for several values of $\alpha$ and $\beta$. The fact that lozenges with different colors overlap confirms that $\chi(\en;\A_{\alpha\va},\B_{\beta\vb})$ depends on the noise parameters $\alpha$ and $\beta$ only through the sum $\alpha+\beta$.
The essential point, however, is that $\chi(\en;\A_{\alpha\va},\B_{\beta\vb})$ constrains the unknown value of $\eta^{\rm g}(\A_{\alpha\va},\B_{\beta\vb})$, so that the experimental determination of the former quantity provides an empirical upper bound for the latter one. In turn, since by definition $\eta^{\rm g}(\A_{\alpha\va},\B_{\beta\vb})$ is itself the upper bound \eqref{eq:def_eta}, determining $\chi(\en;\A_{\alpha\va},\B_{\beta\vb})$ provides a limit to the visibility of the pair $(\A_{\alpha\va},\B_{\beta\vb})$ in the set of all pairs of measurements that are compatible.

\begin{figure}[t]
 \includegraphics[width=0.6\columnwidth]{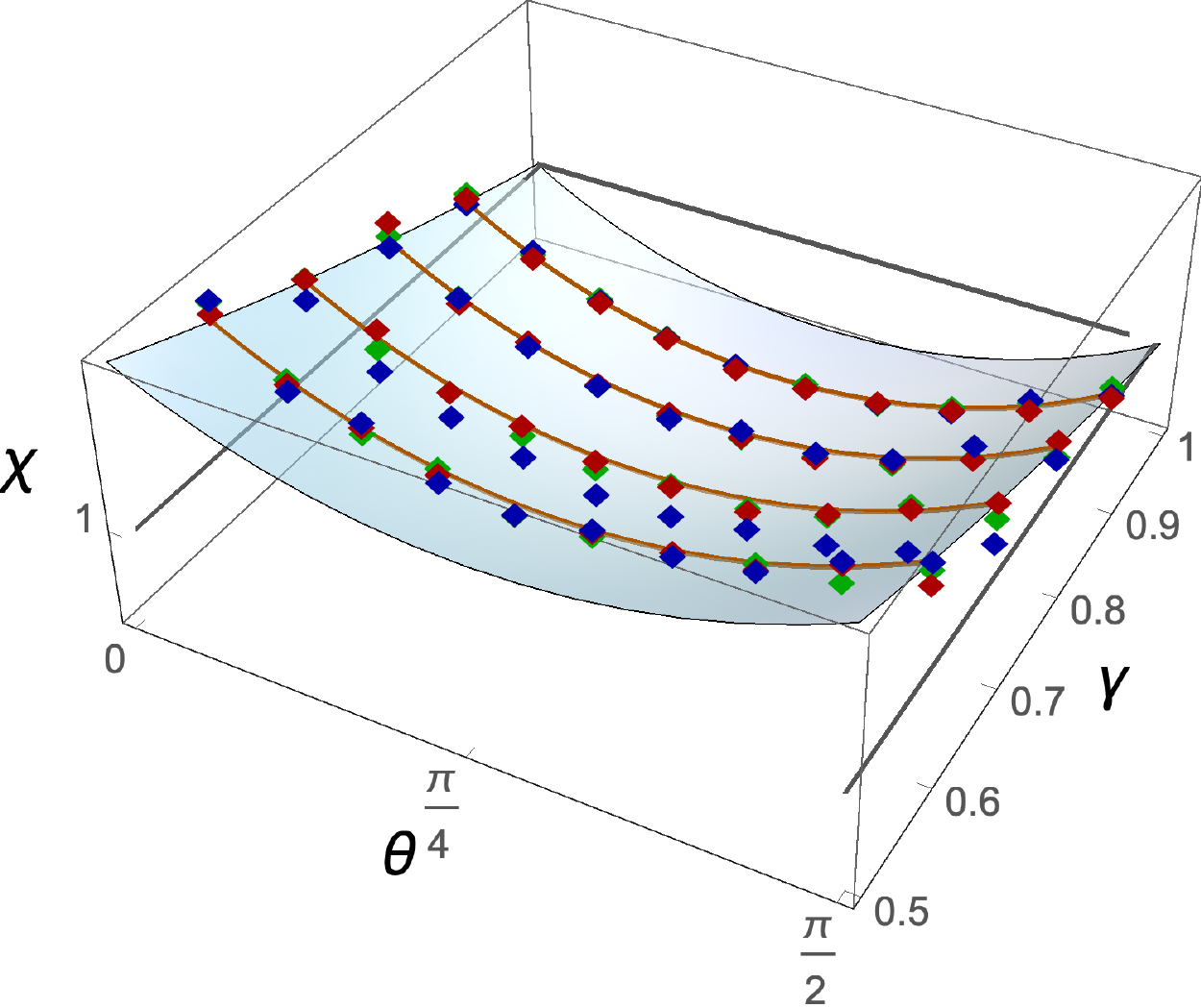}
\caption{The quantity $\chi(\en;\A_{\alpha\va},\B_{\beta\vb})$ for $\va$ and $\vb$ as in \eqref{eq:AB_qubit2}, but with possibly different noise parameters $\alpha$ and $\beta$. In the graph, $\chi(\en;\A_{\alpha\va},\B_{\beta\vb})$ is plotted as a function of $\theta$ and $\gamma=(\alpha+\beta)/2$.
The experimental values are represented by the lozenges, 
while the semitransparent surface is the theoretical prediction \eqref{eq:tpr}. The solid lines are sections for $\gamma$ equal to $0.5$, $0.7$ and $0.9$.
Lozenges corresponding to different values of $\alpha$ and $\beta$ which sum to the same $\gamma$ almost overlap, as we expect since $\chi(\en;\A_{\alpha\va},\B_{\beta\vb})$ depends on $\alpha$ and $\beta$ only through the sum $\alpha+\beta$. For $\alpha=\beta=\gamma$, the theoretical prediction is the same as in Fig.~\ref{fig:equal}.
The quantity $\chi(\en;\A_{\alpha\va},\B_{\beta\vb})$ provides an upper bound for the unknown value of $\eta^{\rm g}(\A_{\alpha\va},\B_{\beta\vb})$.}
\label{fig:unequal}
\end{figure}

\section{Conclusion}\label{sec:conc}

We have investigated both theoretically and experimentally the assessment of incompatibility for two qubit measurements, and to this aim we have employed a strategy based on the success statistics of a state-discrimination task with intermediate information.
Our approach is applicable to a broad and diversified class of measurements related to the photon polarization degrees of freedom in an optical setup. The experimental implementation required us to have full control on the phase and purity in the projection stage of the measurement.
In particular, we have demonstrated that
the generalized incompatibility robustness can be empirically quantified by the proposed protocol
in the presence of couples of unbiased measurements along perpendicular directions in an optimal plane.
This is the case for both projective and noisy measurements, as long as they are characterized by an equal amount of noise.
Furthermore, we have shown that, in the more general scenario with unequal noises, the empirical success probability that can be experimentally accessed in our setup
still provides a significant upper bound for the incompatibility generalized robustness. The data therefore allow to identify a region where a given pair of measurements are incompatible, and to constrain the distance of this pair from the set of all compatible measurements.

Our results prove the effectiveness of a general strategy to investigate both qualitatively and quantitatively
the incompatibility of quantum measurements.
Future studies will naturally address a larger number of measurements and more complex physical systems, aiming at a comprehensive
understanding and control of this intrinsically quantum feature.

\acknowledgments
A.S. and B.V. acknowledge
support from UniMi, via Transition Grant
H2020 and PSR-2 2020.

\appendix
\renewcommand{\theequation}{A\arabic{equation}}
\setcounter{equation}{0}

\section*{Appendix}

In this appendix, we evaluate the incompatibility generalised robustness for the pair of qubit measurements $(\A_{\alpha\va},\B_{\beta\vb})$ defined in \eqref{eq:AB_qubit1} under the assumption of equal noise parameters. Throughout the section, we let $2\theta\in (0,\pi)$ be the angle between the two unit vectors $\va$ and $\vb$, and $\gamma\in (0,1]$ be the common value of the noise parameters $\alpha$ and $\beta$. We first employ the full symmetry properties of the pair $(\A_{\gamma\va},\B_{\gamma\vb})$ in order to drastically simplify the measurements $(\Cc,\D)$ involved in the maximization problem \eqref{eq:def_eta}. Then, we use the well-known characterization of compatibility for dichotomic qubit measurements in order to conclude the calculation of $\eta^{\rm g}(\A_{\gamma\va},\B_{\gamma\vb})$. In this way, we provide the proof of \eqref{eq:main_1}, \eqref{eq:tpr} for $\gamma$ assuming the values \eqref{eq:gamma}.

The result of this appendix expands that of \cite[Section 3.5]{DeFaKa19}, where the value of $\eta^{\rm g}(\A_{\gamma\va},\B_{\gamma\vb})$ was derived by means of semidefinite programming in the noiseless case with $\gamma = 1$.

\subsection{Symmetric measurement assemblages}

The symmetry properties of the two qubit measurements \eqref{eq:AB_qubit1} involve permutations of their outcomes as well as of the measurements themselves. Such a kind of symmetries are easily managed within the general approach described in \cite{NgDeBaGu20}, which applies to tuples of measurements that are transformed by some group action. In order to briefly illustrate this approach, we recall that a {\em finite fiber bundle} is a triple $(Z,T,\pi)$, in which $Z$ and $T$ are finite sets and $\pi:Z\to T$ is a surjective map. The set $Z$ is the {\em total space} of the bundle, $T$ is its {\em base space} and $\pi$ is the {\em bundle projection}. Further, for all $t\in T$, the inverse image $Z_t = \pi^{-1}(\{t\})$ is the {\em fiber} at the base point $t$. A {\em measurement assemblage} on $(Z,T,\pi)$ is then a map $\M:Z\to M_d(\Cb)$, where $M_d(\Cb)$ is the linear space of all $d\times d$ complex matrices, such that
\begin{enumerate}[(i)]
\item $\M(z)\geq 0$ for all $z\in Z$;\label{it:ass_1}
\item $\sum_{z\in Z_t} \M(z) = \id$ for all $t\in T$.\label{it:ass_2}
\end{enumerate}
Properties \eqref{it:ass_1} - \eqref{it:ass_2} mean that, for all base points $t\in T$, the restriction $\M_t = \left.\M\right|_{Z_t}$ is a measurement with outcomes in the fiber $Z_t$.

A measurement assemblage on $(Z,T,\pi)$ is {\em compatible} if all the measurements $(\M_t)_{t\in T}$ are compatible. This is more conveniently stated by saying that there is a measurement $\J$ with outcomes in the set $\Sec(Z,T,\pi)$ of all the sections of the bundle,
$$
\Sec(Z,T,\pi) = \left\{s:T\to Z\mid \pi(s(t)) = t \text{ for all } t\in T\right\}\,,
$$
such that
$$
\M(z) = \sum_{s\in \Sec(Z,T,\pi)} \delta_{z,s(\pi(z))}\,\J(s)
$$
for all $z\in Z$.

Symmetries of a measurement assemblage $\M$ are described by a finite group $G$, acting both on the bundle $(Z,T,\pi)$ and on the Hilbert space $\hh$ of the system, in such a way that $\M$ intertwines the two actions. More precisely, we assume that $(Z,T,\pi)$ is a {\em $G$-bundle}, i.e., there are two left actions $G\times Z\ni (g,z)\mapsto gz\in Z$ and $G\times T\ni (g,t)\mapsto gt\in T$ of the group $G$ on the sets $Z$ and $T$, respectively, such that $\pi(gz) = g\pi(z)$ for all $g\in G$ and $z\in Z$. Moreover, we fix a projective unitary representation of $G$ on $\Cb^d$, that is, a map $U:G\to M_d(\Cb)$ which satisfies $U(g)^*U(g) = \id$ and $U(gh)\propto U(g)U(h)$ for all $g,h\in G$. The two actions of $G$ on $(Z,T,\pi)$ and on $\hh$ combine together, yielding the following action on any measurement assemblage $\M:Z\to M_d(\Cb)$:
$$
g\M(z) = U(g)\M(g^{-1}z)U(g)^*
$$
for all $g\in G$ and $z\in Z$. The measurement assemblage $\M$ is {\em $G$-covariant} if
\begin{enumerate}[(i)]\setcounter{enumi}{2}
\item $g\M = \M$ for all $g\in G$.\label{it:ass_3}
\end{enumerate}
Property \eqref{it:ass_3} implies that $\M_{gt}(gz) = U(g)\M_t(z)U(g)^*$ for all $g$, $t$ and $z$, and, in particular, that the measurement $\M_t$ is covariant in the usual sense with respect to the subgroup $G_t = \{g\in G\mid gt = t\}$ which stabilizes the fiber at $t$ (see \cite[Chapter IV]{PSAQT82}).

For a measurement assemblage $\M$ on an arbitrary finite fiber bundle $(Z,T,\pi)$, we can define the incompatibility generalised robustness $\eta^{\rm g}(\M)$ by obviously extending \eqref{eq:def_eta},
\begin{equation}\label{eq:def_eta_2}
\begin{aligned}
\eta^{\rm g}(\M) = \max\,\{\eta\in [0,1] \mid & \ \eta\,\M + (1-\eta)\,\N\in\JMeas(Z,T,\pi) \\
& \hspace{2cm} \text{ for some } \N\in\Meas(Z,T,\pi)\} \,,
\end{aligned}
\end{equation}
where $\Meas(Z,T,\pi)$ is the set of all measurement assemblages on $(Z,T,\pi)$ and $\JMeas(Z,T,\pi)$ denotes the subset of all assemblages that are compatible.

The following standard symmetrization trick considerably simplifies the evaluation of $\eta^{\rm g}(\M)$ in the covariant case.

\begin{proposition}\label{prop:covariant}
If $\M$ is a $G$-covariant measurement assemblage on $(Z,T,\pi)$, then
\begin{equation*}
\begin{aligned}
\eta^{\rm g}(\M) = \max\,\{\eta\in [0,1] \mid & \ \eta\,\M + (1-\eta)\,\N\in\JMeas(Z,T,\pi) \\
& \hspace{2cm} \text{ for some $G$-covariant } \N\in\Meas(Z,T,\pi)\} \,.
\end{aligned}
\end{equation*}
\end{proposition}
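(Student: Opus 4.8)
The plan is to use the standard averaging (Reynolds) argument over the symmetry group $G$. First I would observe that any admissible competitor $(\eta,\N)$ for the unconstrained problem \eqref{eq:def_eta_2} can be replaced by $(\eta,\overline{\N})$, where $\overline{\N} = \frac{1}{|G|}\sum_{g\in G} g\N$ is the group average of $\N$. I need to check three things: that $\overline{\N}$ is again a measurement assemblage on $(Z,T,\pi)$, that it is $G$-covariant, and that the averaged mixture $\eta\,\M + (1-\eta)\,\overline{\N}$ is still compatible (i.e.\ lies in $\JMeas(Z,T,\pi)$). Granting these, the constrained maximum on the right-hand side is over a subset of the feasible region of the unconstrained maximum, so it is $\leq \eta^{\rm g}(\M)$; conversely, starting from an optimal unconstrained $\N$ and averaging produces a $G$-covariant feasible point achieving the same $\eta$, so the constrained maximum is $\geq \eta^{\rm g}(\M)$. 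Equality follows.

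The three checks are all routine but worth itemizing. That $\overline{\N}$ satisfies property \eqref{it:ass_1} is immediate since each $g\N(z) = U(g)\N(g^{-1}z)U(g)^*\geq 0$ and positivity is preserved under sums and positive scalar multiples. For property \eqref{it:ass_2}, fix $t\in T$ and compute $\sum_{z\in Z_t}\overline{\N}(z) = \frac{1}{|G|}\sum_{g}\sum_{z\in Z_t} U(g)\N(g^{-1}z)U(g)^*$; since $\pi(g^{-1}z) = g^{-1}\pi(z) = g^{-1}t$ and $z\mapsto g^{-1}z$ is a bijection from $Z_t$ onto $Z_{g^{-1}t}$, the inner sum is $U(g)\big(\sum_{w\in Z_{g^{-1}t}}\N(w)\big)U(g)^* = U(g)\id U(g)^* = \id$, whence the average is $\id$. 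The $G$-covariance of $\overline{\N}$ is the usual fact that a group average is invariant: $h\overline{\N} = \frac{1}{|G|}\sum_g (hg)\N = \frac{1}{|G|}\sum_{g'} g'\N = \overline{\N}$, using that $U$ is a projective representation so the phase ambiguities in $U(hg)$ versus $U(h)U(g)$ cancel between the conjugation $U(g)\cdot U(g)^*$.

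The main point — and the one place where a little care is needed — is showing that compatibility is preserved under the averaging of the \emph{mixture}. Here I would use that $\M$ is itself $G$-covariant, so $\eta\,\M + (1-\eta)\,\overline{\N} = \frac{1}{|G|}\sum_g \big(\eta\, g\M + (1-\eta)\, g\N\big) = \frac{1}{|G|}\sum_g g\big(\eta\,\M + (1-\eta)\,\N\big)$. Thus the averaged mixture is a convex combination of the group translates of the original compatible mixture. Each translate $g(\eta\,\M + (1-\eta)\,\N)$ is compatible: if $\J$ is a joint measurement with outcomes in $\Sec(Z,T,\pi)$ for $\eta\,\M+(1-\eta)\,\N$, then $g\J$, defined appropriately on sections via the induced $G$-action on $\Sec(Z,T,\pi)$, is a joint measurement for the translate (one checks the marginal relations transform correctly under conjugation by $U(g)$ and the bundle action). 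Finally, $\JMeas(Z,T,\pi)$ is convex — this is the bundle-level analogue of the convexity of $\JMeas(X,Y)$ noted in Section~\ref{sec:inco_robu}, and is verified by taking the corresponding convex combination of joint measurements — so the average of compatible mixtures is compatible. This completes the argument.
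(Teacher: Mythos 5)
Your proposal is correct and follows essentially the same symmetrization argument as the paper: average the noise $\N$ (equivalently the compatible mixture) over $G$, use covariance of $\M$ to identify the averaged mixture with $\eta\,\M+(1-\eta)\,\overline{\N}$, and invoke convexity together with $G$-invariance of the compatible set. The only difference is that you spell out details the paper leaves implicit (the fiberwise normalization of $\overline{\N}$, the cancellation of projective phases, and the transformation of the joint measurement under the induced action on sections), while the paper additionally notes, via a compactness argument from the cited reference, that the feasible set of $\eta$ is the closed interval $[0,\eta^{\rm g}(\M)]$, which justifies the attainment of the maximum you use in the converse direction.
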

\begin{proof}
We first observe that the set
$$
\{\eta\in [0,1] \mid \eta\,\M + (1-\eta)\,\N\in\JMeas(Z,T,\pi) \text{ for some } \N\in\Meas(Z,T,\pi)\}
$$
coincides with the closed interval $[0,\eta^{\rm g}(\M)]$ by an easy convexity and compactness argument (see e.g.~\cite[pp.~4-5]{DeFaKa19}). Thus, $\eta\leq\eta^{\rm g}(\M)$ if and only there exists $\N\in\Meas(Z,T,\pi)$ such that
$$
\mathsf{O} = \eta\,\M + (1-\eta)\,\N \in \JMeas(Z,T,\pi) \,.
$$
By defining
$$
\mathsf{O}^G = \frac{1}{|G|} \sum_{g\in G} g\mathsf{O}\,,\qquad\qquad \N^G = \frac{1}{|G|} \sum_{g\in G} g\N\,,
$$
where $|G|$ denotes the cardinality of $G$, the convexity of $\Meas(Z,T,\pi)$ and $\JMeas(Z,T,\pi)$ entails that $\mathsf{O}^G$ and $\N^G$ are $G$-covariant measurement assemblages, and
$$
\mathsf{O}^G = \eta\,\M + (1-\eta)\,\N^G \in \JMeas(Z,T,\pi) \,.
$$
The claim then follows.
\end{proof}

\subsection{Incompatibility generalized robustness of two equally noisy unbiased qubit measurements}

Now we show that the qubit measurements \eqref{eq:AB_qubit1} are a particular instance of a $G$-covariant measurement assemblage. We choose $Z=\{+\va,-\va,+\vb,-\vb\}$ as the total space. Further, we let $T=\{a,b\}$ be the base space and define the bundle projection $\pi(\pm\va) = a$ and $\pi(\pm\vb) = b$. The measurements \eqref{eq:AB_qubit1} are then collected in the measurement assemblage
\begin{equation}\label{eq:assemblage_AB}
\M(\pm\va) = \A_{\gamma\va}(\pm)\,,\qquad\qquad \M(\pm\vb) = \B_{\gamma
\vb}(\pm) \,,
\end{equation}
which satisfies $\M_a = \A_{\gamma\va}$ and $\M_b = \B_{\gamma\vb}$. The symmetry group of $\M$ is the dihedral group $G\subset SO(3)$, which consists of the identity element $I$ of $\Rb^3$ together with the three $180^\circ$ rotations $R_1$, $R_2$ and $R_3$ fixing the orthogonal unit vectors $\ve_1 = (\va+\vb)/\|\va+\vb\|$, $\ve_2 = (\va-\vb)/\|\va-\vb\|$ and $\ve_3 = \ve_1\wedge\ve_2$, respectively. Indeed, $G$ acts on the bundle $(Z,T,\pi)$ in the natural way, and it acts on the Hilbert space $\Cb^d$ by means of the projective unitary representation $U(I) = \id$ and $U(R_i) = \sigma_i$ for $i=1,2,3$. It is then easy to see that $\M$ is $G$-covariant.

By Proposition \ref{prop:covariant}, we can evaluate $\eta^{\rm g}(\M)$ by assuming that in the maximum \eqref{eq:def_eta_2} the measurement assemblages $\N$ are $G$-covariant for the dihedral group $G$. Any such assemblage is necessarily of the form
\begin{equation}\label{eq:N1}
\N(\pm\va) = \tfrac{1}{2}\big(\id\pm\vc\cdot\vsigma\big) = \Cc(\pm)\,, \qquad\qquad \N(\pm\vb) = \tfrac{1}{2}\big(\id\pm\vd\cdot\vsigma\big) = \D(\pm)
\end{equation}
with
\begin{equation}\label{eq:N2}
\vc = \nu\,\big(\cos\varphi\,\ve_1 + \sin\varphi\,\ve_2\big)\,,\qquad\qquad\vd = \nu\,\big(\cos\varphi\,\ve_1 - \sin\varphi\,\ve_2\big)
\end{equation}
and $\varphi\in (0,2\pi]$, $\nu\in [0,1]$. Indeed, $\N(+\va) = (\gamma\id+\vc\cdot\vsigma)/2$ for some $\gamma\in\Rb$ and $\vc\in\Rb^3$, where the positivity of $\N(+\va)$ is equivalent to $\no{\vc}\leq\gamma$. Then,
\begin{align*}
\N(-\va) = \N(+R_3\,\va) = U(R_3)\,\N(+\va)\,U(R_3)^* = \tfrac{1}{2} \big(\gamma\id + R_3\,\vc \cdot\vsigma\big)\,.
\end{align*}
The condition $\N(+\va) + \N(-\va) = \id$ implies $\gamma=1$ and $R_3\,\vc = -\vc$, which yield the first two halves of \eqref{eq:N1}-\eqref{eq:N2}. The other two halves follow from
$$
\N(\pm\vb) = \N(\pm R_1\,\va) = U(R_1)\,\N(\pm \va)\,U(R_1)^* = \tfrac{1}{2}\big(\id\pm R_1\,\vc\cdot\vsigma\big) \,.
$$

In order to evaluate $\eta^{\rm g}(\M)$, now we recall from \cite{Busch86} that two unbiased qubit measurements $\E(\pm) = (\id\pm\ve\cdot\vsigma)/2$ and $\F(\pm) = (\id\pm\vf\cdot\vsigma)/2$ are compatible if and only if
$$
\big\|\ve+\vf\big\| + \big\|\ve-\vf\big\| \leq 2\,.
$$
From \eqref{eq:AB_qubit1}, \eqref{eq:AB_qubit2}, \eqref{eq:assemblage_AB}, \eqref{eq:N1} and \eqref{eq:N2}, it then follows that the measurement assemblage $\eta\,\M + (1-\eta)\,\N$ is compatible if and only if
\begin{equation*}
\mo{\eta\gamma\cos\theta + (1-\eta)\nu\cos\varphi} + \mo{\eta\gamma\sin\theta + (1-\eta)\nu\sin\varphi} \leq 1\,,
\end{equation*}
or equivalently
$$
\big\|\eta\,\va + (1-\eta)\,\vc\big\|_1 \leq 1 \,,
$$
where $\no{\cdot}_1$ is the $\ell^1$-norm of $\Rb^3$, i.e., $\no{\vu}_1 = \mo{u_1}+\mo{u_2}+\mo{u_3}$.

All the above discussion is summarized by the relation
\begin{equation}\label{eq:eta_simple}
\eta^{\rm g}(\A_{\gamma\va},\B_{\gamma\vb}) = \max\,\{\eta\in [0,1] \mid \ \eta\,\va + (1-\eta)\,\vc\in Q \text{ for some } \vc\in D\} \,,
\end{equation}
where $D=\{\vu\in\Rb^3 \mid \no{\vu}\leq 1,\ u_3=0\}$ and $Q=\{\vu\in\Rb^3 \mid \no{\vu}_1\leq 1,\ u_3=0\}$ are the unit disk and the unit square in the plane spanned by $\ve_1$ and $\ve_2$, respectively. The condition $\no{\va}_1 = \gamma\,(\cos\theta+\sin\theta)\leq 1$ is equivalent to the compatibility of $\A_{\gamma\va}$ and $\B_{\gamma\vb}$, that is, $\eta^{\rm g}(\A_{\gamma\va},\B_{\gamma\vb}) = 1$. Thus, from now on we only consider the case with $\gamma\,(\cos\theta+\sin\theta)>1$.

\begin{figure}[!ht]
\begin{tikzpicture}[domain=0:4, scale=2.5]

\def\ax{0.65};\def\ay{0.60};
\def\cpx{-0.45};\def\cpy{-0.893029}; 
\def\cx{-0.2};

\draw[thick,fill= darkgreen!50!white] (0,0) circle (1);
\draw[thick,fill=blue!50!white] (1,0)--(0,1)--(-1,0)--(0,-1)--cycle;

\draw (-0.3,0.3)node{$Q$};
\draw (-0.6,0.6)node{$D$};

\draw[->, thick](-1.25,0)--(1,0)node[anchor=north west]{$\ve_1$}--(1.25,0);
\draw[fill=black] (1,0) circle (0.015);

\draw[->, thick](0,-1.25)--(0,1)node[anchor=south east]{$\ve_2$}--(0,1.25);
\draw[fill=black] (0,1) circle (0.015);

\draw (0,0)node[anchor=north east]{$0$};
\draw[fill=black] (0,0) circle (0.015);

\draw (\ax,\ay)node[anchor=south east]{$\va$};
\draw[fill=black] (\ax,\ay) circle (0.015);

\draw (\cpx,\cpy)node[anchor=north east]{$\vc^{\,\prime}$};
\draw[fill=black] (\cpx,\cpy) circle (0.015);

\draw (\cpx,\cpy)--(\ax,\ay);

\draw (\cx,{((\ay-\cpy)/(\ax-\cpx))*(\cx-\cpx)+\cpy})node[anchor=south east]{$\vc$};
\draw[fill=black] (\cx,{((\ay-\cpy)/(\ax-\cpx))*(\cx-\cpx)+\cpy}) circle (0.015);

\draw ({(\cpx*(\ay-1)+\ax*(1-\cpy))/(\ax+\ay-\cpx-\cpy)},{1-(\cpx*(\ay-1)+\ax*(1-\cpy))/(\ax+\ay-\cpx-\cpy)})node[anchor=east]{$\va^{\,\prime}$};
\draw[fill=black] ({(\cpx*(\ay-1)+\ax*(1-\cpy))/(\ax+\ay-\cpx-\cpy)},{1-(\cpx*(\ay-1)+\ax*(1-\cpy))/(\ax+\ay-\cpx-\cpy)}) circle (0.015);

\end{tikzpicture}

\caption{The set of all equally noisy pairs of qubit measurements (green disk $D$) and the subset of all compatible pairs (blue square $Q$), with the geometric construction of the robustness \eqref{eq:eta_simple}.\label{fig:DQ}}
\end{figure}
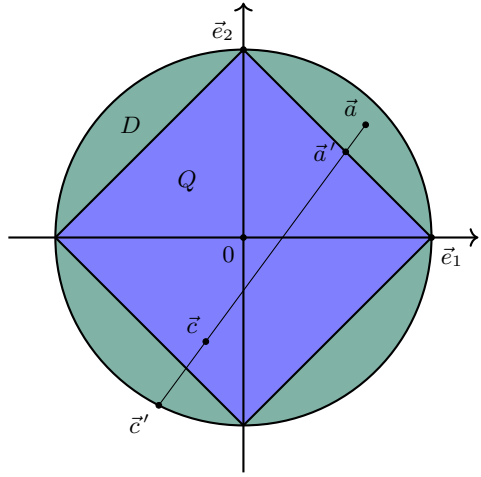

In the maximum \eqref{eq:eta_simple}, we can assume with no restriction that the optimal $\vc$ is given by \eqref{eq:N2} with $\nu=1$ and $\varphi\in [\pi/2,\,2\pi]$. Indeed, if this is not the case, then either $\nu=1$ and $\varphi\in (0,\,\pi/2)$, which is impossible (see Fig.~\ref{fig:DQ}), or $\nu<1$, which implies that $\eta$ is not maximal because we have $\vc^{\,\prime}=(1+\varepsilon)\,\vc - \varepsilon\,\va\in D$ for sufficiently small $\varepsilon > 0$ and then $\eta'\,\va + (1-\eta')\,\vc^{\,\prime}\in Q$ with $\eta'=(\eta+\varepsilon)/(1+\varepsilon) > \eta$. For fixed $\vc$, the maximal $\eta$ is attained when the vector $\va^{\,\prime} = \eta\,\va + (1-\eta)\,\vc$ lies in the upper right edge of the square $Q$, that is, $a^{\prime}_1 + a^{\prime}_2 = 1$, which gives
$$
\eta = \frac{1-(\cos\varphi+\sin\varphi)}{\mu(\cos\theta+\sin\theta)-(\cos\varphi+\sin\varphi)} = 1+\frac{\frac{1}{\sqrt{2}}-\mu\cos\left(\theta-\frac{\pi}{4}\right)}{\mu\cos\left(\theta-\frac{\pi}{4}\right) - \cos\left(\varphi-\frac{\pi}{4}\right)} \,.
$$
Maximizing over $\varphi\in [\pi/2,\,2\pi]$, we obtain
\begin{equation*}
\eta^{\rm g}(\A_\mu,\B_\mu) = \frac{\sqrt{2}+1}{\sqrt{2}+\mu\,(\cos\theta+\sin\theta)} \,.
\end{equation*}


\end{document}